\newtheorem{theorem}{Theorem}
\newtheorem{proposition}[]{Proposition}
\newtheorem{lemma}[]{Lemma}
\newtheorem{definition}{Definition}
\newtheorem{remark}{Remark}
\numberwithin{equation}{section}
\newcounter{rhp}
\newenvironment{rhp}[1][]{\refstepcounter{rhp}\par\medskip
   \noindent \textbf{Riemann-Hilbert problem~\therhp. #1} \rmfamily}{\medskip}
\providecommand*\email[1]{\href{mailto:#1}{#1}}
\DeclareMathOperator{\Tr}{Tr}
\def\&{\vspace{-5pt}&}
\definecolor{MyBlue}{rgb}{0.25,0.5,0.75}
\colorlet{NextBlue}{MyBlue!20}
\colorlet{SecondBlue}{MyBlue!40}
\newcommand{\bb}[1]{\mathbb{#1}}
\newcommand{\rh}[0]{RHP}
\newcommand{\pii}[0]{Painlev\'e II}
\newcommand{\tfn}[0]{$\tau$-function}
\newcommand{\cA}[0]{\mathcal{A}}
\newcommand{\cB}[0]{\mathcal{B}}
\newcommand{\cC}[0]{\mathcal{C}}
\newcommand{\cD}[0]{\mathcal{D}}
\newcommand{\wY}[0]{\widetilde{Y}}
\newcommand{\wF}[0]{\widetilde{F}}
\newcommand{\z}[0]{\zeta}
\newcommand{\x}[0]{\xi}
\newcommand{\s}[0]{\sigma_{3}}
\newcommand{\p}[0]{\psi}
\newcommand{\pri}[0]{\psi_{r}^{-1}}
\newcommand{\pr}[0]{\psi_{r}}
\newcommand{\prp}[0]{\psi_{r+}}
\newcommand{\prm}[0]{\psi_{r-}}
\newcommand{\pli}[0]{\psi_{l}^{-1}}
\newcommand{\pl}[0]{\psi_{l}}
\newcommand{\plp}[0]{\psi_{l+}}
\newcommand{\plm}[0]{\psi_{l-}}
\newcommand{\wP}[0]{\Psi}
\NewDocumentCommand{\tens}{t_}
 {%
  \IfBooleanTF{#1}
   {\tensop}
   {\otimes}%
 }
\NewDocumentCommand{\tensop}{m}
 {%
  \mathbin{\mathop{\otimes}\displaylimits_{#1}}%
 }
\begin{document}
\title{Fredholm determinant representation of the homogeneous \pii{} \tfn{}}

\author{ Harini Desiraju\footnote{{\email{harini.desiraju@sissa.it}}}
  \\
  \normalsize\it Scuola Internazionale Superiore di Studi Avanzati,   \\
  \normalsize\it Via Bonomea, 265, 34136 Trieste, Italy.} 

\date{\vspace{-5ex}}
\maketitle
\begin{abstract}
We formulate the generic $\tau$-function of the homogeneous Painlev\'e II equation as a Fredholm determinant of an integrable (Its-Izergin-Korepin-Slavnov) operator. The \tfn{} depends on the isomonodromic time $t$ and two Stokes parameters. The vanishing locus of the \tfn{}, called the {\it Malgrange divisor} is then determined by the zeros of the Fredholm determinant.
 \end{abstract}

\tableofcontents
\pagebreak

\section{Introduction}
Painlev\'e equations describe isomonodromic deformations of certain meromorphic linear ordinary differential equations on $\mathbb{CP}^{1}$.
In the theory of isomonodromic deformations, the Jimbo--Miwa--Ueno $\tau$-function is defined in terms of a closed 1--form, $\omega_{JMU}$ \cite{JMU} by the formula
\begin{equation*} \label{jmu}
\delta \log\tau_{_{JMU}} := \omega_{_{JMU}},
\end{equation*}
where $\delta$ denotes the total differential with respect to the 'deformation' parameters.
In this paper, we study the isomonodromic \tfn{} of the second order   scalar nonlinear ordinary differential equation (ODE) in the complex domain of the form
\begin{equation} \label{pii}
\dfrac{d^2 }{d x^2}u(x) = x u(x) - 2u(x)^3 \, , \quad x \in \bb{R},
\end{equation}
called  the {\it homogeneous}  Painlev\'e II equation. It arises as a consistency (zero-curvature) condition for the following set of linear ODEs for the $2\times 2$ complex valued matrix $\Phi(\lambda, x)$ 
\begin{gather} \label{FNin}
\begin{array} {c} \dfrac{d\Phi}{d\lambda} =  \left[  -i\left( 4\lambda^2 + x + 2u^2  \right) \s + 4\lambda u \sigma_{1} - 2 v \sigma_{2} \right] \Phi, \\ \\
\dfrac{d\Phi}{dx} = \left[ -i\lambda \s + u \sigma_{1}  \right] \Phi, \end{array}
\end{gather}
where $v(x) = u_{x}$, and the Pauli matrices 
\begin{gather}
\sigma_{1} = \left( \begin{array}{cc}
0 & 1 \\ 1 & 0
\end{array} \right) \, , \, \sigma_{2} = \left( \begin{array}{cc}
0 & - i  \\ i & 0
\end{array} \right) \, , \, \sigma_{3} = \left( \begin{array}{cc}
1 & 0 \\ 0 & -1
\end{array} \right).
\end{gather}
The set of ODEs \eqref{FNin} are called the Flaschka-Newell (FN) Lax pair \cite{FN}. The ODE \eqref{FNin} has an irregular singularity at $\lambda = \infty$ of Poincar\'e rank 3 and thus exhibits Stokes phenomenon on six rays in the complex plane. 
The generalized monodromy data described by the Stokes matrices is encoded in the jump matrix $G(\lambda,x)\in GL(2,\bb{C})$, $x\in \mathbb{R}$, $\lambda \in \Sigma$ on the contour $\Sigma\in \mathbb{C}$. The inverse problem consists of reconstructing the function 
\begin{gather}
\Psi(\lambda, x) = \Phi(\lambda, x) e^{- \frac{4}{3} \lambda^3 - x \lambda}, \quad \lambda = \bb{C} \backslash \Sigma, \quad x\in \mathbb{R} 
\end{gather}
from the generalized monodromy data. Such a solution $\Psi$ is piece-wise defined on the Stokes sectors with jumps on the Stokes rays specified by Stokes matrices. This is achieved by solving the following Riemann--Hilbert problem (RHP)
\begin{equation}
\begin{split}
\label{RHP_I}
\Psi_{+}(\lambda,x)& = \Psi_{-}(\lambda,x) G(\lambda,x),\quad \lambda\in\Sigma\\
\Psi(\lambda,x)  &= \bb{1}+{\mathcal O}(\frac{1}{\lambda}), \, \, \textrm{as} \, \, \lambda \rightarrow \infty,
\end{split}
\end{equation}
where $\Psi_\pm$ indicate the boundary value of $\Psi$ from the left side and the right side respectively of the oriented contour $\Sigma$, and $G (\lambda,x)$ is piece-wise defined on each of the Stokes rays with appropriate constraints on the Stokes matrices  (see section \ref{section:2} for a detailed description). With this data, the Malgrange form  is defined as follows.
\begin{definition}
 The {\it Malgrange form} associated with the RHP \eqref{RHP_I} is  defined as \cite{Mal}
\begin{gather}
\omega_\Sigma= \int_{\Sigma} \frac{d\lambda}{2\pi i } \Tr\left[ \Psi_{-}^{-1} \frac{\partial \Psi_{-} }{\partial \lambda} \delta G G^{-1}  \right], \label{DEF}
\end{gather}
where $\delta$ denotes the total differential with respect to the isomonodromic parameter  $x$.
\end{definition}
Since  $\omega_\Sigma$,  is a closed one form in the space of isomonodromic parameters, one can define (locally) the corresponding $\tau_\Sigma$ function as 
\[
\delta \log \tau_{_{\Sigma}} = \omega_\Sigma\,.
\]
The  Malgrange  form  is  a logarithmic form  in  the  sense  that  it  has  only  simple  poles with integer residues.
The locus in the parameter space where the  RHP problem \eqref{RHP_I} becomes unsolvable is called the {\it Malgrange divisor} because (in the language of algebraic geometry) it  can be described locally as the zero level set of a locally analytic function. 

The general gist is that this local expression can be represented (in abstract terms) as a Fredholm determinant (see for example \cite{palmer}). A concrete realization of this local function (a \tfn{})  as a Fredholm determinant (possibly globally defined on an open dense set of parameters) is of practical interest since it potentially allows for numerical investigation of the Malgrange divisor.

This paper treads this line of approach by providing a concrete representation for the \tfn{} of Painlev\'e\ II in terms of a Fredholm determinant expressed via an explicit (albeit complicated) kernel. According with this general framework, the zeros of the \tfn{} 
indicate the points where the \rh{} (i.e, the inverse monodromy problem) is not solvable.

It is well known that certain special solutions of Painlev\'e equations have a Fredholm determinant representation \cite{ BD, BO, IIKS, TW, tracy1994level1,tracy1994level}. The recent works of Lisovyy, Cafasso, Gavrylenko \cite{CGL,GL} provide a method to formulate the isomonodromic \tfn{}s of general solutions of PIII, PV, PVI as Fredholm determinants. There are two key aspects to their construction. One is the property that the \rh{}s of these Painlev\'e equations can be reduced on to a \rh{} on the unit circle. The second feature is that the jump on the unit circle 
enables the formulation of the \tfn{} as 
"Widom constant". An important feature of their construction is that the local parametrices of the \rh{} of the Painlev\'e equations are  described by known special functions which in turn act as `building blocks' of the \tfn{}.  For example, the local parametrices of the Painlev\'e VI \rh{} are given by hypergeometric functions and the \tfn{} is expressed as a Fredholm determinant of a hypergeometric kernel.

A natural question then is whether the \tfn{}s of Painlev\'e I, II, IV admit a Fredholm determinant representation. In a first step to answer this question in the case of Painlev\'e II, the present author recently 
showed that the \rh{} corresponding to the special 1-parameter (Ablowitz-Segur) family of solutions to the \pii{} equation \cite{AS} can be recast as a \rh{} on the imaginary axis as opposed to the unit circle in \cite{CGL}, hinting at a similar structure for the general \rh{} of \pii{} \cite{HD}. As a consequence, the corresponding \tfn{} (which is known to be the determinant of the Airy kernel \cite{TW}), can be formulated as a Widom constant. 

In the case of the \rh{} of \pii{}, it is known that under particular transformations that facilitate asymptotic analysis pf the Painlev\'e II transcendent at $x\rightarrow - \infty$, the local parametrices are described by parabolic cylinder functions $D_{\nu}(z)$ which we recall in Section \ref{section:2}. We then reduce the \rh{} to a \rh{} with a discontinuity on the imaginary axis in Section \ref{section:3}. However, we will see that the jump on the imaginary axis does not admit a Birkhoff factorization and hence the technique to construct Fredholm determinants in \cite{HD} is not applicable to our case. Instead, we use a variation of the formalism in \cite{MalB} namely, a lower, diagonal, upper triangular (LDU) factorization of the jump matrix 
to construct the \tfn{} as a Fredholm determinant of an integrable (IIKS) \cite{Deift,IIKS} operator with the parabolic cylinder functions acting as the 'building blocks'. 
In order to formulate our result, let
\begin{equation}
t= (-x)^{3/2}>0. \label{tdef123}
\end{equation}
We encode the Stokes parameters  $s_1$ and $s_3$ that define the \pii{}  RHP,   (see \eqref{pr_par}, \eqref{Stokes} below)  by :
\begin{gather}
\label{hnu}
\nu = -\frac{1}{2\pi i } \log(1-s_{1}s_{3}), \quad
h = -\frac{\sqrt{2\pi}}{ \Gamma(-\nu) s_3} e^{i\pi \nu},
\end{gather}
with $1-s_1s_3\neq 0$, $\arg \left(1-s_{1} s_{3}\right) \in \left( -\pi, \pi \right)$.
\begin{theorem}\label{theorem:1}
The \tfn{} of \pii{} equation can be expressed in terms of a Fredholm determinant of an integrable operator $\widetilde{\mathcal{K}}$ as follows
\begin{gather}
d_{t} \log \tau_{_{PII}} =  d_{t} \log \det\left[ \bb{1}_{_{L^2 (i\bb{R})}}- \widetilde{\mathcal{K}} \right] -\left[ \frac{4i\nu}{3} + \frac{2\nu^2}{t} \right] + \mathcal{F}(t, \nu, h),\label{tau_pii}
\end{gather}
where  $\mathcal{F}(t,\nu,h)$ is a regular function of the parameters $t$, $h$ and $\nu$ defined in \eqref{tdef123}, \eqref{hnu}.
The kernel of $ \widetilde{\mathcal{K}} $ takes the form
\begin{gather}
\widetilde{K}(z,w) = \frac{\cC(z,t)}{\cA(z,t)} \varphi_{+}^{2}(w) \int_{i\bb{R}+ \epsilon} \frac{d\widetilde{w}}{2\pi i} \frac{\varphi_{+}^{-2}(\widetilde{w})}{(z-\widetilde{w})(\widetilde{w}-w)} \cA(\widetilde{w},t) \cB(\widetilde{w},t) \label{ker1}
\end{gather} 
with the functions (see \eqref{def:JUMP_ABCD})
\begin{gather}
\cA(z,t) =  \z^{\nu} \x^{\nu} e^{\frac{2i}{3} t} \left(  e^{-\pi i \nu}  D_{-\nu}(i\z) D_{-\nu}(i\x) + \nu^2 h^{-4} e^{2\pi i \nu} D_{\nu-1}(\z) D_{\nu-1}(\x)  \right)  \\  \nonumber \\
\cB(z,t) =  \left(\frac{z+1/2}{z-1/2} \right)^{2\nu} \z^{\nu} \x^{-\nu} \left( i h^2 e^{- i \pi \nu}  D_{-\nu}(i\z) D_{-\nu-1}(i\x) + \nu h^{-2} e^{2\pi i \nu} D_{\nu-1}(\z) D_{\nu}(\x)  \right) \nonumber \\
\end{gather}
where $D_{\nu}$ is the parabolic cylinder function,  $\zeta=\zeta(z,t)$, $\xi=\xi(z,t)$ are given by
\[
\zeta \equiv \zeta(z,t) = 2t^{1/2} \sqrt{- \frac{4i}{3} z^3 + iz - \frac{i}{3}}; \quad \xi  = \zeta(-z,t)
\]
with branch cuts $(-\infty, -1]$ and $[1, \infty)$ respectively, and the function $\left(\frac{z-z_{-}}{z-z_{+}}  \right)^{\nu}$ is defined on $\mathbb{C}\backslash\left[z_{-}, z_{+}  \right]$ and the branch is fixed by the following asymptotic condition for $z\rightarrow \infty$ 
\begin{gather}
    \left(\frac{z-z_{-}}{z-z_{+}}  \right)^{\nu} \rightarrow \bb{1}.
\end{gather}
Moreover,
\begin{gather}
\cC(z,t) = \cB(-z,t) \, ; \quad
\varphi_{+}(w,t) = \int_{i\bb{R}- \epsilon} \frac{dw'}{2\pi i} \frac{\log\cA(w',t)}{w-w'}.
\end{gather}
From \eqref{def:JUMP_ABCD}, one can see that the functions $\mathcal{A}, \mathcal{B}, \mathcal{C}$ are analytic on a strip on the imaginary axis and 
\begin{gather}
    \lim_{z\rightarrow \pm i\infty} \mathcal{A}(z,t) = 1, \quad  \lim_{z\rightarrow \pm i\infty} \mathcal{B}(z,t) = 0, \quad \lim_{z\rightarrow \pm i\infty} \mathcal{C}(z,t) = 0.
\end{gather}
\end{theorem}

Some  comments 
are in order.
\begin{itemize}
\item[1.] The \tfn{} \eqref{tau_pii} is defined on $\bb{C}\backslash\left\lbrace 0\right\rbrace$ and is analytic in $t$. Refer to Remark:\ref{Remark_t} for the details. 

\item[2.] The important point of theorem \ref{theorem:1} is that the zeros of $\tau_{_{PII}}$, called the {\it Malgrange divisor} are determined solely by the zero locus of the Fredholm determinant. The Malgrange divisor is in one to one correspondence with the poles of the Painlev\'e II transcendent.

\item[3.]  The Malgrange divisor could be calculated  numerically by computing the Fredholm determinant in \eqref{tau_pii} employing the algorithm developed in \cite{bornemann}. 

\item[4.] The  limit from the general \tfn{} of \pii{} in \eqref{tau_pii} to the \tfn{} of the Hastings-McLeod family of solutions (determinant of the Airy kernel) \cite{TW} is singular because $s_3s_1=1$ ($\nu$ in \eqref{hnu} goes to infinity).

\end{itemize}

Theorem \ref{theorem:1}  is the first step in deriving the Fourier series representation of the Painlev\'e II tau-function obtained  in \cite{ILP}. A series representation of the \tfn{} \eqref{tau_pii} can be obtained from the minor expansion of the Fredholm determinant on an appropriate basis. A similar computation for the case of the Ablowitz-Segur solutions of \pii{} is worked out in \cite{HD}. Furthermore, we expect that the methods developed in this manuscript can be applied to some solutions of the Painlev\'e I and IV equations.
Finally, the study of the {\it inhomogeneous}  Painlev\'e II equation (\cite{Fokas}, Ch.5)
   \begin{equation}
    u_{xx} = x u - 2 u^3 + \alpha \label{inhom}
   \end{equation} 
  requires a significant extension of the techniques developed for  the {\it homogeneous}  Painlev\'e II equation  \eqref{pii}.
  Indeed the parameter $\alpha$ which we set to zero in \eqref{pii} induces a monodromy at the origin and the ideas developed  in this manuscript need a  non-trivial generalization. The \tfn{} of the general solution of  \eqref{inhom} has been expressed as a ratio of Hankel determinants in \cite{joshi2004generating}, \cite{DetPII}. Finally, we remark that the \tfn{}s of rational solutions of Painlev\'e equations have an interpretation not as Fredholm determinants, but as determinants of some special polynomials \cite{clarkson}. 
 \vspace*{0.3cm}

\noindent {\bf Acknowledgements.}\\ 
Thanks are due to Alexander Its, Marco Bertola, Oleg Lisovyy, Pasha Gavrylenko, and Tamara Grava for illuminating discussions and valuable suggestions. 
Part of the work was completed during visits to the Department of Mathematics and Statistics at Concordia University, and the Department of Mathematical sciences at IUPUI, which were supported by H2020-MSCA-RISE-2017 PROJECT No. 778010 IPADEGAN, INFN Iniziativa Specifica GAST.



\section{Setup}\label{section:2}

We recall the \rh{} associated to the Flaschka-Newell Lax pair \eqref{FNin} from \cite{Fokas} (see also \cite{Bot}). The matrix $\Psi(\lambda, x) \in GL(2, \bb{C})$, $x\in \mathbb{R}$  satisfies the following \rh{} on the contour in fig. \ref{fig:1}. 
\begin{rhp} \label{rhp:1}
\begin{itemize}
\item $\Psi(\lambda, x)$ is piece-wise analytic for $\lambda\in \bb{C}\backslash \cup_{k=1}^{6}\gamma_{k}$, \begin{gather}
\gamma_{k} = \left\lbrace \lambda \in \mathbb{C} : \arg \lambda =\frac{\pi}{6} + \frac{\pi}{3}(k-1)  \right\rbrace ,\,\, k=1,...,6.
\end{gather}
We define $\Psi_{k}:=\Psi(\lambda, x)\vert_{\Omega_{k}}$ with the Stokes sector $\Omega_{k}$ defined by
 \begin{equation}
\Omega_{k} = \left\lbrace \lambda \in \mathbb{C} : \frac{\pi}{6} (2k-3) < \arg \lambda < \frac{\pi}{6} (2k-1)   \right\rbrace , \, \, k=1,...,6.
 \end{equation} 
 
\item For $\lambda \in \gamma_{k}$, the following boundary condition is satisfied. \begin{equation}
\Psi_{k+1} = \Psi_{k} S_{k}, \quad \Psi_{7} =\Psi_{1}, \label{jump1}
\end{equation}
where the matrices $S_{k}$ are alternatively lower or upper triangular \begin{gather}
S_k = \left( \begin{array}{cc}
1 & 0 \\ s_k e^{2i\theta(\lambda, x) } & 1
\end{array}  \right) \, \,  \textrm{for} \,\, k \equiv 1\bmod 2 , \quad S_k = \left( \begin{array}{cc}
1 & s_k e^{-2i\theta(\lambda, x)} \\ 0 & 1
\end{array}  \right) \, \,  \textrm{for} \,\, k \equiv 0 \bmod 2, \label{2.2}
\end{gather}
and the exponent $\theta(\lambda, x) = \frac{4}{3}\lambda^3 + x \lambda$. The Stokes parameters $s_k$ are constants satisfying the constraint
\begin{equation}
s_{k+3} = -s_k \,\, , \, \, s_{1} - s_{2} + s_{3} + s_{1} s_{2} s_{3} =0, \label{par}
\end{equation}  
and the jump matrices satisfy the following identity
\begin{gather}
S_{1} S_{2}... S_{6} = \mathbb{1}. \label{JUMPIDEN}
\end{gather}
\item In the asymptotic limit of $\lambda$, \begin{gather}\lim_{\lambda \rightarrow \infty} \Psi(\lambda,x)  = \bb{1}. \label{asym1}
\end{gather} 
\end{itemize}
\end{rhp}
\begin{figure}[H] 
\includegraphics[trim={8cm 8.9cm 0.8cm 9cm}, clip, width=12cm]{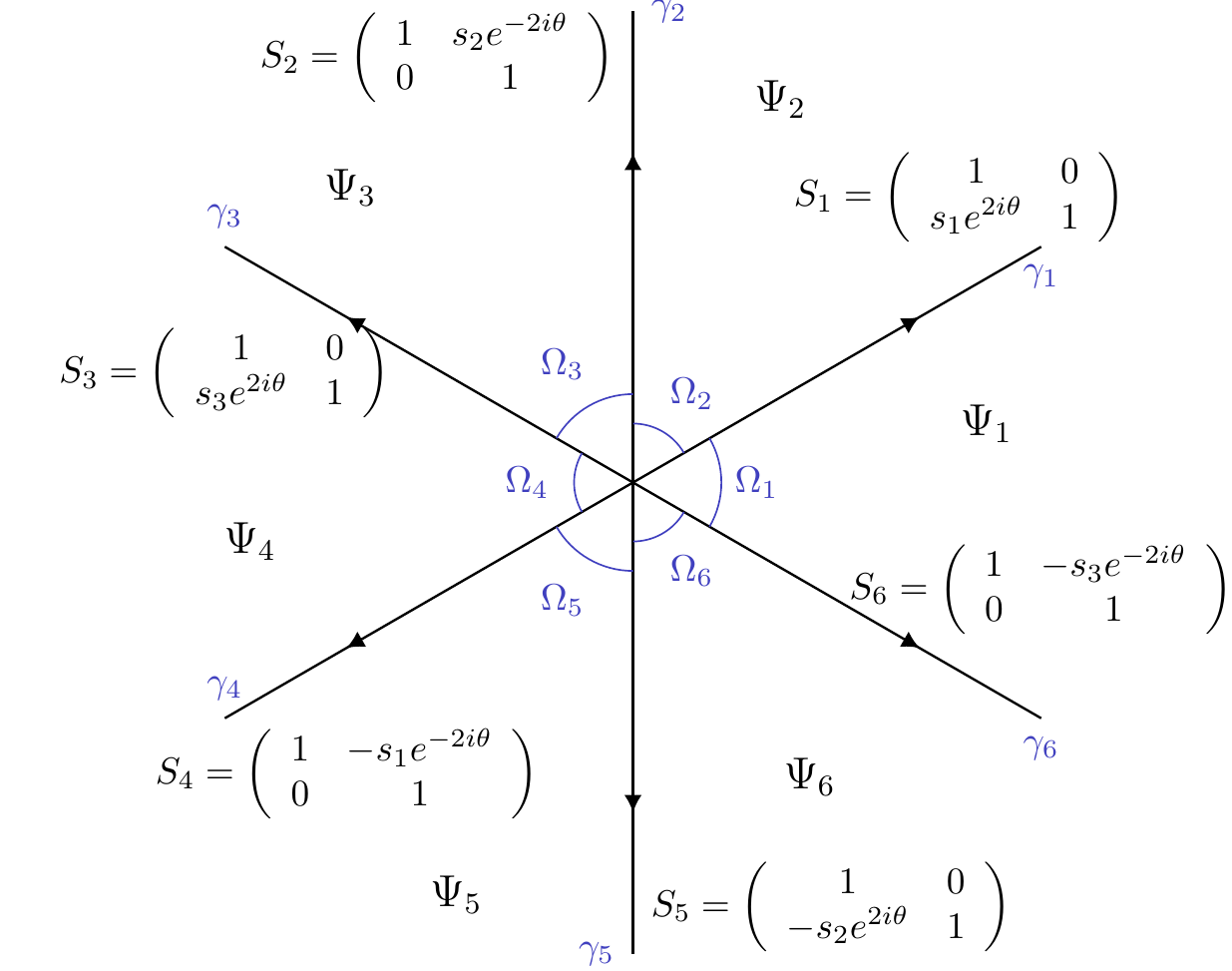}
\centering
\caption{Stokes rays \label{fig:1}}
\end{figure}
The constraint on Stokes data \eqref{par} implies that the solution $\Psi(z,t)$ depends only on two Stokes parameters. We will see in the next subsection that all the functions depend on $s_{1}$, $s_{3}$. In this paper, we are concerned with the generic 2-parameter solutions\footnotetext[2]{The Ablowitz-Segur family of solutions correspond to the case $s_{2}=0$.} that correspond to the following constraints on the Stokes parameters
\begin{gather}
s_{1} s_{3} \neq 1 \, ; \quad \arg(1-s_{1} s_{3}) \in (-\pi, \pi).
\end{gather}
In order to modify the Riemann--Hilbert contour of \pii{}, we perform the change of variables ($x\in \mathbb{R} $)
\begin{gather}
\lambda = (-x)^{1/2} z , \quad t= (-x)^{3/2}>0. \label{cov}
\end{gather}
The characteristic exponent $\exp\left(i \theta(\lambda,x) \right)$ in \eqref{2.2} is then replaced by
\begin{equation}
e^{it\theta(z)}, \quad \theta(z) = \frac{4}{3} z^3 - z. 
\end{equation}
The stationary points are then $z_{\pm} = \pm 1/2$. 
The contour in fig. \ref{fig:1} can be deformed into fig. \ref{fig:grl}
\begin{figure}[H] 
\includegraphics[trim={0.8cm 2cm 2.3cm 2cm},clip, width=10cm]{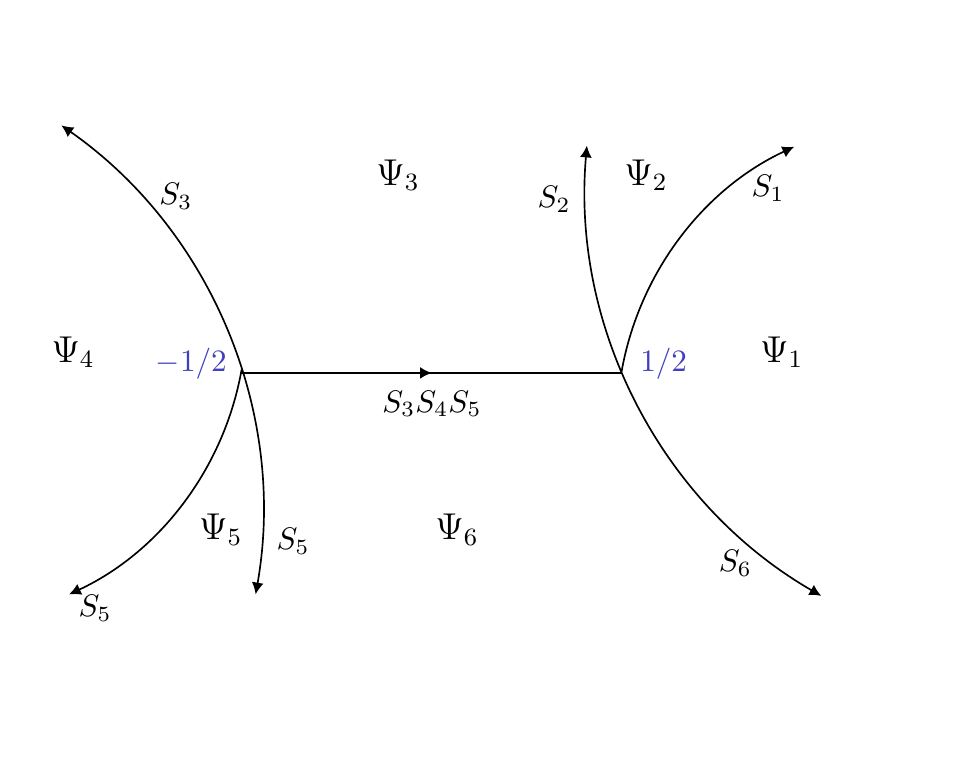}
\centering
\caption{Deforming the contour in fig. \ref{fig:1} \label{fig:grl}}
\end{figure}
Noticing that the product of Stokes matrices $(S_{3} S_{4} S_{5})^{-1}$ can be written as a product of lower triangular, diagonal and upper triangular matrices (LDU)
\begin{gather}
(S_{3} S_{4} S_{5})^{-1} = \left( \begin{array}{cc} 1-s_{1}s_{3} & s_{1} e^{-2 it \theta(z)} \\ s_1 e^{2it \theta(z)}& 1+ s_{1} s_{2} \end{array} \right) = S_{L} S_{D} S_{U} \nonumber \\
=  \left( \begin{array}{cc} 1 & 0 \\ s_1 (1- s_{1} s_{3})^{-1} e^{2it \theta(z)} & 1 \end{array} \right) \left( \begin{array}{cc} 1-s_{1}s_{3} & 0 \\ 0 & (1- s_{1} s_{3})^{-1} \end{array} \right) \left( \begin{array}{cc} 1 & s_{1}(1- s_{1} s_{3})^{-1} e^{-2it \theta(z)} \\ 0 & 1 \end{array} \right) ,
\end{gather}
the contour in fig. \ref{fig:1} can be transformed into  the  contour $\Sigma $  in fig. \ref{fig:2}.  One can easily check there is  no monodromy  around the points $z= \pm 1/2$. 
\begin{figure}[H] 
\hspace*{-1.4cm}
\includegraphics[trim={0.8cm 2cm 0.4cm 2cm},clip, width=12cm]{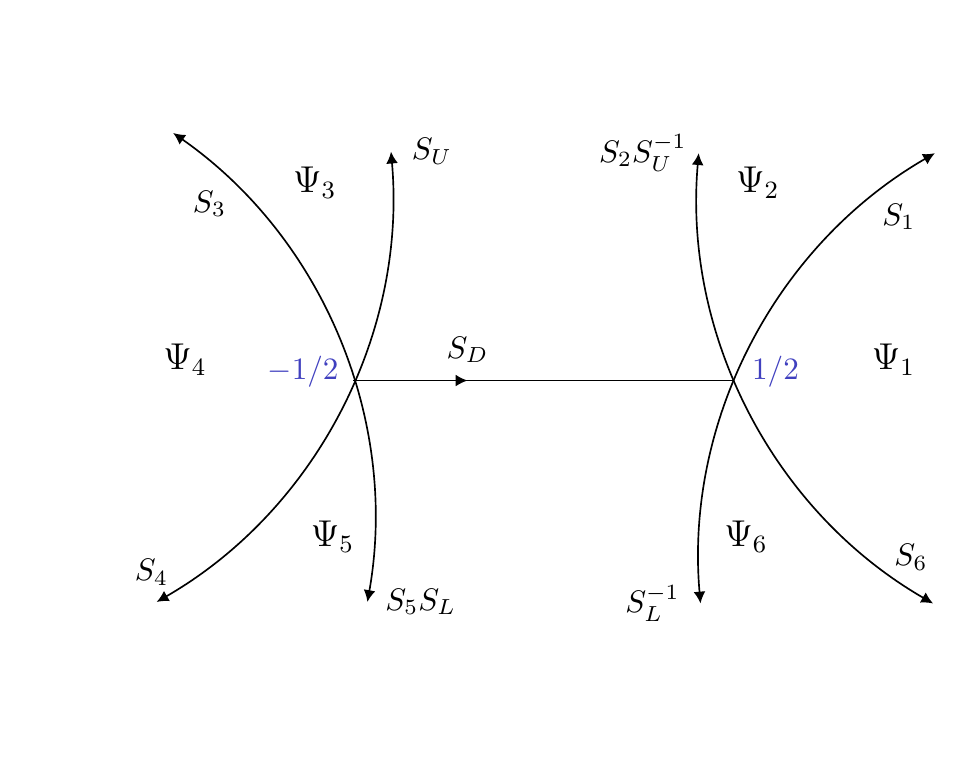}
\centering
\caption{Deformed \pii{} Riemann--Hilbert contour $\Sigma$.  \label{fig:2}}
\end{figure}
In fig. \ref{fig:2}, $S_{D} = (1-s_{1} s_{3})^{\s}$. Defining 
\begin{gather} \label{NU}
\nu = -\frac{1}{2\pi i } \log(1-s_{1} s_{3}),
\end{gather}
one can verify that the function 
\begin{gather}
\Psi^{D}(z,t) = \left( \frac{z-z_{-}}{z-z_{+}} \right)^{\nu \sigma_{3}} 
\end{gather}
satisfies the \rh{} on the segment $\left[ z_{-}, z_{+} \right]$ where $z_{\pm } = \pm 1/2$ and the branch is fixed by the following asymptotic condition for $z\rightarrow \infty$ 
\begin{gather}
    \left(\frac{z-z_{-}}{z-z_{+}}  \right)^{\nu} \rightarrow \bb{1}.
\end{gather}
On the contour $\Sigma$ in fig. \ref{fig:2}, the function $\wP(\lambda(z),x) \equiv \wP(z,t)$ in \eqref{FNin} solves the following \rh{}
\begin{rhp} \label{rhp:4}
\begin{itemize}
 \item $\Psi(z,t)$ is analytic on $z\in \bb{C}\backslash \Sigma $, $t\in \mathbb{R}$.
 \item 
 For $z\in \Sigma$, on each of the Stokes rays
\begin{gather}
 \wP_{-}^{-1}(z,t) \, \wP_{+}(z,t) =  G(z,t),
\label{J_Gamma}
\end{gather}
where $G(z,t)$ is piece-wise defined on each of the rays of the contour $\Sigma$, $\wP_{\pm}$ are  the boundary values of $\Psi$ from the left side and the right side of the oriented contour $\Sigma$ respectively, and the constraint  \eqref{JUMPIDEN} holds.
\item $\lim_{z\rightarrow \infty} \Psi(z,t) = \mathbb{1} $
\end{itemize}
\end{rhp}

In terms of the RHP \eqref{J_Gamma}, the \tfn{} \eqref{DEF} is 

\begin{equation}\label{def:1}
\delta\log\tau_{_{PII}} \equiv \delta\log\tau_{_{\Sigma}}(t) := \int_{\Sigma} \frac{d z}{2\pi i} \Tr\left[ \wP_{-}^{-1}\wP_{-}' \dot{G} G^{-1}  \right], 
\end{equation}
where $\dot{G}$ means the {\it total} derivative with respect to $t$  and $\wP_{-}' $ means partial derivative with respect to $z$.
We will  convert the above expression to a   Fredholm determinant in theorem \ref{theorem:1}.

\subsection{Parametrices}\label{subsection:2.1}

To express the \tfn{} \eqref{def:1} in terms of a Fredholm determinant we need to construct a  ``parametrix''  solutions, namely  ``local solutions'' of the \rh{}. These local solutions are patched together  and  the actual problem  can be recast as the solution of a  compact (trace--class)  perturbation of the identity.

The effectiveness of the idea relies entirely upon the level of simplicity of these parametrices; the simpler (or rather,  more explicit) these reference parametrices are, the more practical the approach is in studying the final problem.  

Keeping this in mind, in this section we construct an explicit solution to a Riemann--Hilbert problem to be used as parametrix for the final one. To this end we recall from (\cite{Fokas}, Ch.9 pg.318) the construction of the local parametrices of Painlev\'e II \rh{} in fig. \ref{fig:2}(the left and right parametrices around the points $z= \pm 1/2$ respectively), in terms of parabolic cylinder functions \cite{Fokas}. 
\subsubsection{Model problem}
Let $Z(\zeta)$ be a $2\times 2$ matrix valued function that solves the following \rh{}.
\begin{rhp} \label{rhp:3}
\begin{itemize}
\item $Z (\zeta)$ is a piece-wise holomorphic function defined as follows in each sector shown in fig. \ref{fig:3}
\begin{gather} \label{Zi}
Z (\zeta) = 
\begin{cases}Z_{0} (\zeta), \quad \arg\zeta \in \left(-\frac{\pi}{4},0 \right) \\ 
Z_{1} (\zeta), \quad \arg\zeta \in \left(0, \frac{\pi}{2} \right) \\
Z_{2} (\zeta), \quad \arg\zeta \in \left(\frac{\pi}{2}, \pi \right)  \\
 Z_{3} (\zeta), \quad \arg\zeta \in \left(\pi , \frac{3\pi}{2} \right) \\
 Z_{4} (\zeta), \quad \arg\zeta \in \left(\frac{3\pi}{2},\frac{7\pi}{4} \right). \end{cases}
\end{gather}
Under the transformation $\zeta \rightarrow -\zeta$ the following symmetry relation holds
\begin{equation}
\sigma_{3} Z_{k+2}\left( e^{i\pi} \zeta \right) \sigma_{3} = Z_{k}(\zeta) e^{-i\pi(\nu+1) \sigma_{3}}. \label{Par_symm}
\end{equation} 

\begin{figure}[H] \label{fig:3}
\includegraphics[trim={0 2cm 0 2cm},clip, width=8.5cm]{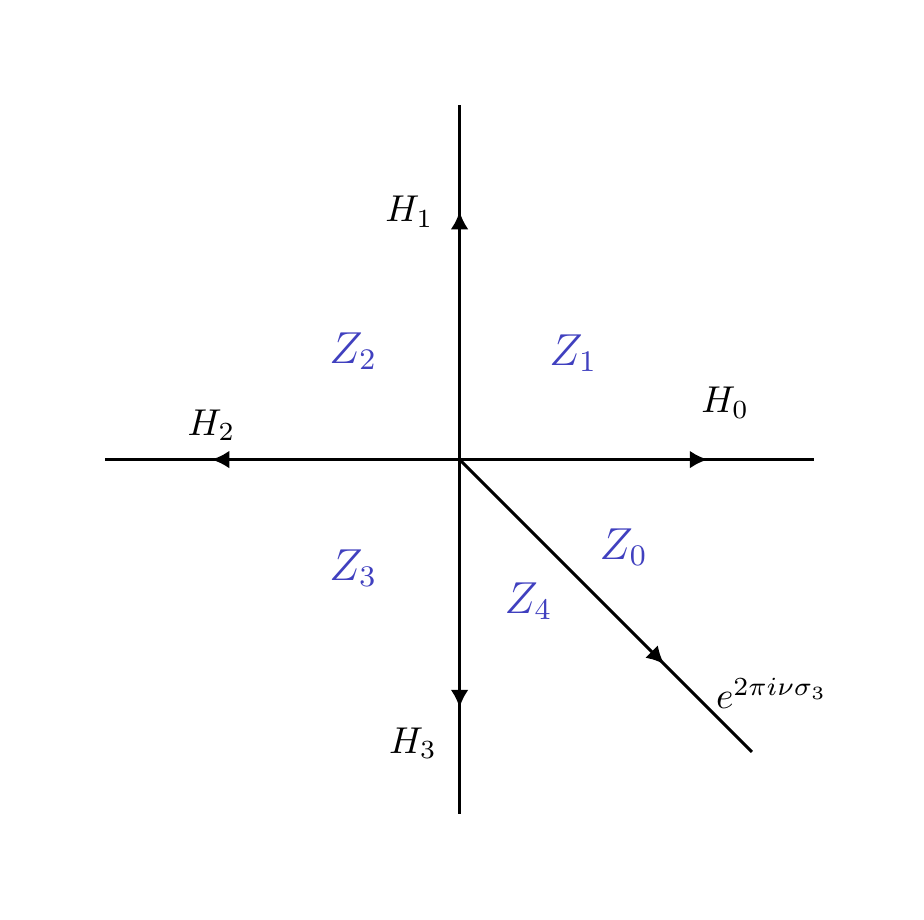}
\centering
\caption{Riemann--Hilbert contour of parabolic cylinder function.}
\end{figure}
\item In each sector, the  following jump conditions are satisfied
\begin{equation}
Z_{k+1}(\zeta) = Z_{k} (\zeta) H_{k}, \quad \arg\zeta = \frac{\pi}{2}k, \quad k=0,1,2,3,4,  \label{h1}
\end{equation}
and $Z_{5} =Z_{0}$.
The jump matrices 
\begin{gather}
\begin{array}{c}
 H_{0 } = \left( \begin{array}{cc} 1 & 0 \\ h_{0} & 1 \end{array} \right), \quad H_{1} =  \left( \begin{array}{cc} 1 & h_{1} \\ 0 & 1 \end{array} \right) \, ,  \quad H_{4}\equiv H_{D}= e^{2\pi i \nu \s}; \\  \\
 H_{k+2} = e^{i\pi \left( \nu + \frac{1}{2} \right) \sigma_{3}} H_{k} e^{-i\pi\left( \nu + \frac{1}{2} \right)\sigma_{3}}, \quad \textrm{for} \, k=0,1. 
 \end{array}\label{pc_jump}
\end{gather}
The Stokes parameters $h_{0}$ and $h_{1}$ are defined as follows
\begin{gather}
h_{0} = -i \frac{\sqrt{2\pi}}{\Gamma (\nu+1)}, \quad h_{1} = \frac{\sqrt{2\pi}}{\Gamma (-\nu)} e^{i\pi\nu}, \quad 1 + h_{0}h_{1} = e^{2\pi i \nu}, \label{h2}
\end{gather}
and the identity $e^{2\pi i \nu\sigma_{3}}H_{0}H_{1}H_{2}H_{3} = I$ implies the triviality of the monodromy at the origin.
\item As $\zeta \rightarrow \infty $,
\begin{equation}
Z(\zeta) = \zeta^{-\sigma/2} \frac{1}{\sqrt{2}} \left(\begin{array}{cc} 1 & 1 \\ 1 & -1 \end{array} \right) \left( \bb{1} + O(\zeta^{-2}) \right) e^{\left( \frac{\zeta^2}{4} - (\nu + \frac{1}{2})\log\zeta \right)\sigma_{3}}. \label{asymp_PC}
\end{equation}
\end{itemize}
\end{rhp}
In the zeroth sector, $Z(\zeta)$ is expressed in terms of the Wronskian of the parabolic cylinder functions as 
\begin{equation}
Z_{0}(\zeta) = 2^{-\sigma_{3}/2} \left(\begin{array}{cc} D_{-\nu-1}(i\zeta) & D_{\nu}(\zeta) \\ \frac{d}{d\zeta} D_{-\nu-1}(i\zeta) & \frac{d}{d\zeta} D_{\nu}(\zeta)  \end{array}  \right)\left(\begin{array}{cc} e^{i\frac{\pi}{2}(\nu+1)} & 0 \\ 0 & 1 \end{array} \right). \label{Z_0} 
\end{equation}
The parabolic cylinder functions $D_{\nu}(z)$, $D_{-\nu-1}(iz)$ are independent solutions to the differential equation
\begin{gather}
\frac{d^2 y(z)}{dz^2} + \left( \nu + \frac{1}{2} - \frac{1}{4} z^2 \right) y(z) =0,
\end{gather}
and $\lim_{z\rightarrow \infty} D_{a}(z) e^{z^2/4} z^{-a} = 1$ for $\vert \arg(z) \vert <\frac{\pi}{2}$. Refer to Ch. 19, \cite{abramowitz1988handbook} for details. 

\subsubsection{Local parametrices}
Under the map
\begin{equation}\label{variablechange}
\zeta(z) = 2 \sqrt{-\frac{4it}{3}z^3 + itz - \frac{it}{3}}= \frac{4}{\sqrt{3}}e^{-3i \pi /4} \sqrt{t}\left( z-\frac{1}{2} \right)\sqrt{z+1},
\end{equation}
where $\zeta(z)$ has a branch cut on $(-\infty, -1]$, we define the right parametrix around $z_{+} = 1/2$ as
{\small \begin{gather}
\pr(z,t) = 
\left( \zeta(z) \frac{z-z_{-}}{z-z_{+}} \right)^{\nu\sigma_{3}} \left(-\frac{h_{1}}{s_{3}}  \right)^{-\sigma_{3}/2} e^{\frac{it}{3}\sigma_{3}} 2^{-\sigma_{3}/2} \left( \begin{array}{cc} \zeta(z) & 1 \\ 1 & 0 \end{array} \right) Z(\zeta(z)) \left(-\frac{h_{1}}{s_{3}}  \right)^{\sigma_{3}/2}e^{it \theta(z) \s}, \label{right_parametrix}
\end{gather}}
and the left parametrix around $z=-1/2$ is determined through the symmetry relation
\begin{gather}
\pl (z,t) = \sigma_{2} \pr(-z,t) \sigma_{2}. \label{left_parametrix}
\end{gather}
In \eqref{right_parametrix}, the parameter $\nu$ and $h_{1}$ are determined by the Stokes parameters $s_{1}$, $s_{3}$. Recall from \eqref{NU} and \eqref{h2},
\begin{gather}
\nu = - \frac{1}{2\pi i} \log(1-s_{1} s_{3}) \, ; \quad
 h_{1} = \frac{\sqrt{2\pi}}{\Gamma(-\nu)} e^{i\pi \nu}. \label{pr_par}
\end{gather}
\begin{figure}[H] \label{fig:6}
\hspace*{-0.8cm}
\includegraphics[trim={0 1cm 0 1cm},clip, width=16.5cm]{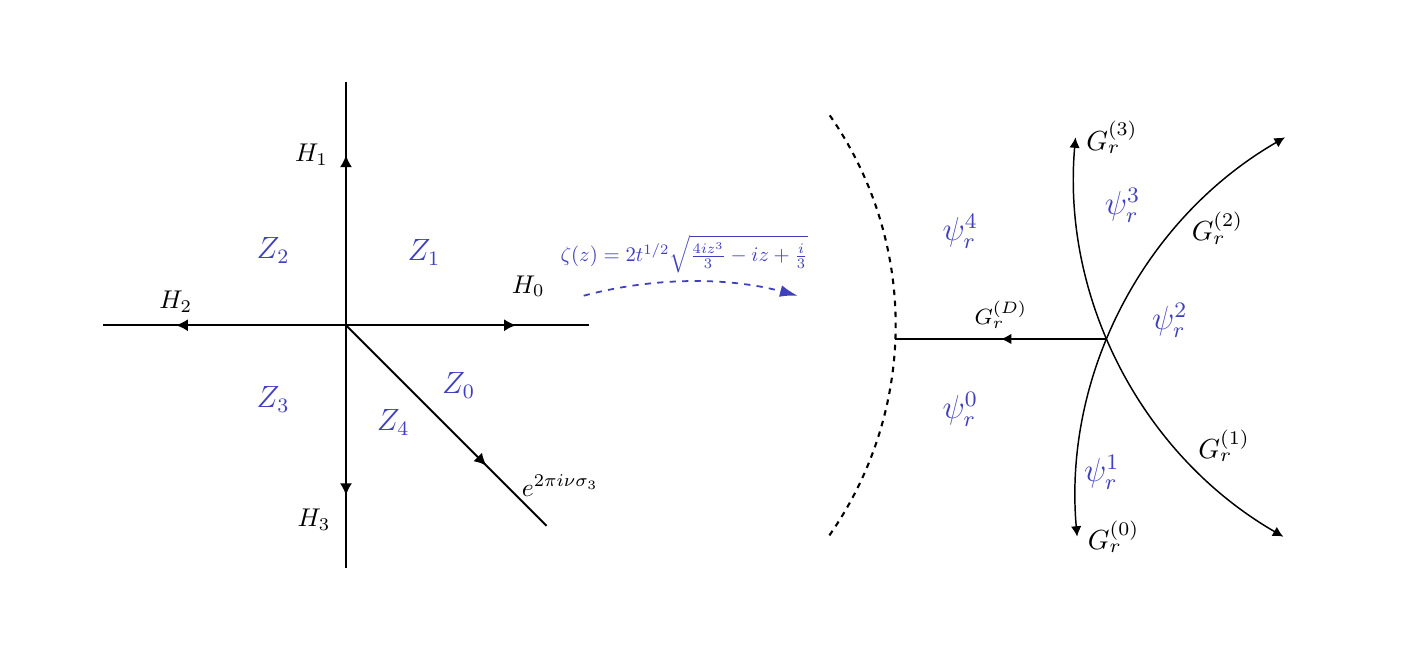}
\centering
\caption{Mapping the $\zeta$-plane to the right-half of $z$-plane}
\end{figure}
In each sector,
{\small \begin{gather}
\pr^{(k)}(z,t) =\left( \zeta(z) \frac{z-z_{-}}{z-z_{+}} \right)^{\nu\sigma_{3}} \left(-\frac{h_{1}}{s_{3}}  \right)^{-\sigma_{3}/2} e^{\frac{it}{3}\sigma_{3}} 2^{-\sigma_{3}/2} \left( \begin{array}{cc} \zeta(z) & 1 \\ 1 & 0 \end{array} \right) Z_{k}(\zeta(z)) \left(-\frac{h_{1}}{s_{3}}  \right)^{\sigma_{3}/2}e^{it \theta(z) \s}. \label{pr_sec}
\end{gather}}
The jumps on Stokes rays in the right and left half planes are denoted by
\begin{gather}
G_{r} := G(z,t) \vert_{\Re(z)>0}\, ; \quad G_{l}:= G(z,t) \vert_{\Re(z)<0}. \label{G_r}
\end{gather}
As a consequence of \eqref{left_parametrix},
\begin{gather}
G_{l}(z,t) = \sigma_{2} G_{r}(-z,t) \sigma_{2} \label{G_symm}
\end{gather}
We now establish the relation between the Stokes matrices of the parabolic cylinder functions $H_{i}$ in \eqref{pc_jump} and $G_{r}\equiv G_{r}^{(i)}$ in \eqref{G_r}. Introducing the notation
\begin{gather}
h = \left( -\frac{h_{1}}{s_{3}} \right)^{1/2},\label{Stokes}
\end{gather}
in a sector $k$ on the right half-plane in fig. \ref{fig:2}, $\pr$ satisfies the following jump condition
\begin{gather}
 \pr^{k+1}(z,t) = \pr^{k}(z,t) e^{-it \theta(z)\s}\left(-\frac{h_{1}}{s_{3}}  \right)^{-\sigma_{3}/2} Z_{k}^{-1} Z_{k+1} \left(-\frac{h_{1}}{s_{3}}  \right)^{\sigma_{3}/2} e^{it \theta(z)\s} \nonumber \\
= \pr^{k}(z,t) e^{-it \theta(z)\s}h^{-\s} H_{k} h ^{\s} e^{it \theta(z)\s} \nonumber \\
= \pr^{k}(z,t) G_{r}^{k}.   \label{H_i}
\end{gather}
Note that $Z_{5} = Z_{0}$ implies that $\pr^{5}(z,t) = \pr^{0}(z,t)$. Therefore, in terms of $H_{k}$, $G_{r}$ is 
\begin{gather}
G_{r}^{(k)}(z,t) =  e^{-it\theta \sigma_{3}} h^{-\sigma_{3}} H_{k} h^{\sigma_{3}} e^{it\theta \sigma_{3}}. \label{GH}
\end{gather}
We define the variable
\begin{equation}
\x (z,t) := \z(-z,t) \label{varminus}
\end{equation} 
with a branch cut on $[+1,\infty)$, that maps the $\x$-plane to the left half-plane of fig.\ref{fig:2} and a similar computation follows for the left parametrix due to the symmetry relation \eqref{left_parametrix}.  We denote the jump condition in each sector on the respective half-planes in fig. \ref{fig:2} by
\begin{gather}
\p_{r,l; +}(z,t) = \p_{r,l; -}(z,t) G_{r,l}(z,t). \label{G_rl}
\end{gather}
\begin{remark}
It is easy to check that the map in fig. \ref{fig:6} holds not just locally but asymptotically in the right half plane, and similarly, the transformation $\x(z,t)$ is valid in the left half plane. Therefore the functions $\psi_{r}(z,t)$ and $\psi_{l}(z,t)$ can indeed be defined on the right and left half planes respectively. 
\end{remark}

\begin{remark}\label{Remark_t}
The transformation \eqref{variablechange} is not valid at the point $t=0$. This implies that $\tau_{_{PII}}$ in \eqref{tau_pii} is valid for $t\in \bb{C}\backslash \lbrace 0 \rbrace$.  
\end{remark}

\section{Reduction to a \rh{} along the imaginary axis} \label{section:3}

Define a matrix function $\Theta(z,t)$ as a ratio of the global solution $\Psi$ on $\Sigma$ in \eqref{J_Gamma} and the local parametrices $\pr$ in \eqref{right_parametrix}, $\pl$ in \eqref{left_parametrix}.
\begin{gather}
\Theta(z,t) := \begin{cases} \wP(z,t) \pri (z,t); \,\, \Re(z) >0 \\  \wP(z,t) \pli (z,t); \,\, \Re(z)<0. \end{cases} \label{cR}
\end{gather}
Note that the local parametrices cancel the jump of the global parametrix on $\Sigma$, ensuring that the function $\Theta(z,t)$ has a jump only on the imaginary axis, solving the following \rh{}.
\begin{rhp} \label{rhp:5}
\begin{itemize}
\item $\Theta(z,t)$ is analytic on $z\in \bb{C}\backslash i\bb{R}$
\item For $z\in i\bb{R}$,
\begin{equation}
 \Theta_{+}(z,t) = \Theta_{-}(z,t) J(z,t) \label{J2}
\end{equation}
 where $J(z,t) =  \pr^{(0)}(z,t) \left[ \pl^{(4)}(z,t)\right]^{-1}$.

\item As $z\rightarrow \infty$,  $\Theta(z,t) = \bb{1} + \mathcal{O} \left( z^{-1} \right)$.
\end{itemize}
\end{rhp}
\begin{figure}[H] \label{fig:4}
\includegraphics[trim={0 2cm 0 2.2cm},clip, width=12.5cm]{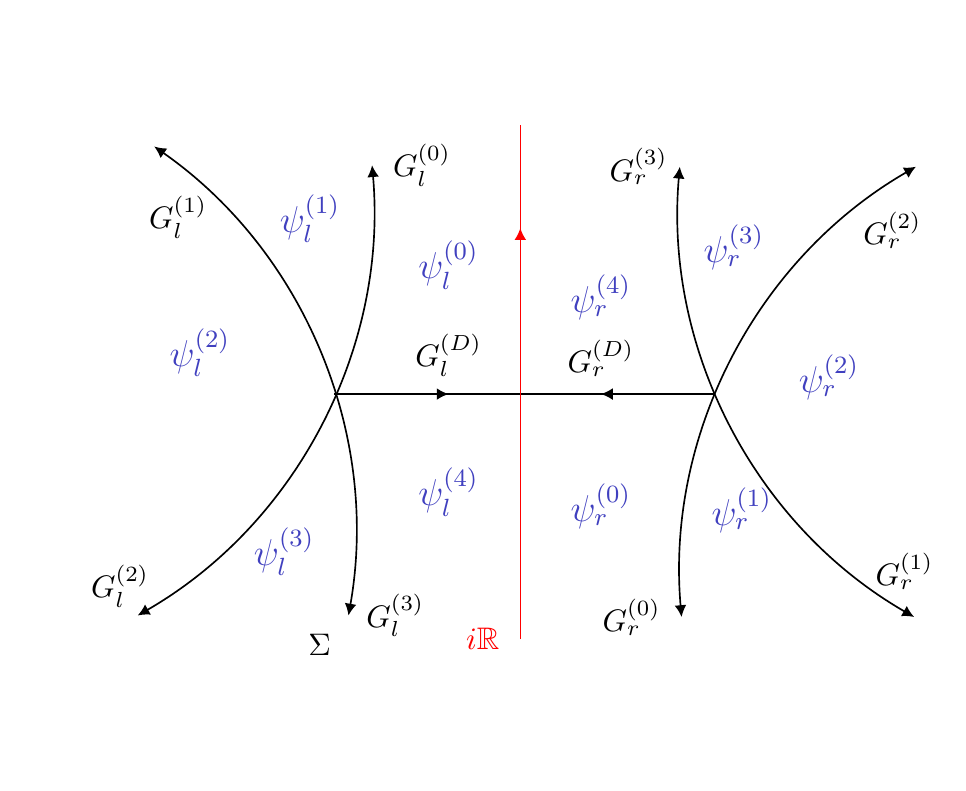}
\centering
\caption{Reducing the \pii{} \rh{} on to the imaginary axis.}
\end{figure}
\begin{remark}\label{remark:2}
The solution of the \rh{}:\ref{rhp:4} defines, via \eqref{cR} a solution of the \rh{\ \ref{rhp:5}}. Vice versa any solution of the \rh{\ \ref{rhp:5}} provides a solution to the \rh{}:\ref{rhp:4} by means of the inverse of the transformation \eqref{cR}. Thus we regard these two problems as {\it equivalent} in the sense that the solvability of one of them is necessary and sufficient condition for the solvability of the other. 
\end{remark}

For later use we compute the expression of the jump matrix $J$ in \eqref{J2}.
\begin{lemma}
The jump on the imaginary axis equals
\begin{gather} 
J(z,t) =  \Theta_{-}(z,t)^{-1} \Theta_{+} (z,t) =  \pr^{(0)}(z,t) \left[ \pl^{(4)}(z,t)\right]^{-1} = \left( \begin{array}{cc} \cA (z,t) & \cB (z,t) \\ \cC (z,t) & \cD(z,t) \end{array} \right) \label{jump_iR}
\end{gather}
where
\begin{gather}
\begin{array}{c}
\cA(z,t) = \frac{1}{h^4} \z^{\nu} \x^{\nu} e^{\frac{2i}{3} t} \left( - e^{-\pi i \nu} h^4 D_{-\nu}(i\z) D_{-\nu}(i\x) - \nu^2 e^{2\pi i \nu} D_{\nu-1}(\z) D_{\nu-1}(\x)  \right)  \\  \\
\cB(z,t) = -\frac{1}{ h^2} \left( \frac{z-z_{-}}{z-z_{+}} \right)^{2\nu} \z^{\nu} \x^{-\nu} \left( -i e^{-\pi i \nu} h^4 D_{-\nu}(i\z) D_{-\nu-1}(i\x) - \nu e^{2\pi i \nu} D_{\nu-1}(\z) D_{\nu}(\x)  \right) \\ \\
\cC(z,t) = \cB(-z,t) \, ; \quad \det J = 1. \\ \\
\end{array} \label{jump_entries}
\end{gather}
The variables $\zeta \equiv \zeta(z,t)$, $\xi \equiv \xi(z,t)$ are defined in \eqref{variablechange}, \eqref{varminus}; and $h$ is defined in terms of Stokes parameters in \eqref{Stokes}.
\end{lemma}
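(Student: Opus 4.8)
The plan is to read off the first two equalities from the construction of $\Theta$ and then reduce the statement to an explicit, finite matrix computation. The identity $J=\Theta_-^{-1}\Theta_+$ is just the jump condition \eqref{J2} of \rh{} \ref{rhp:5}. For the second equality, note that $\wP$ is analytic across $i\bb{R}$ (the imaginary axis is not part of $\Sigma$), so its two boundary values agree there; hence by \eqref{cR} the boundary value of $\Theta$ from $\Re(z)>0$ is $\wP\,\pri$ and from $\Re(z)<0$ is $\wP\,\pli$. The global factor $\wP$ cancels in $\Theta_-^{-1}\Theta_+$, leaving $\pr\,\pli=\pr\,[\pl]^{-1}$, and the relevant sector values are $k=0$ for the right parametrix and $k=4$ for the left, giving $J=\pr^{(0)}[\pl^{(4)}]^{-1}$. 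Everything is now reduced to evaluating and multiplying two explicit $2\times2$ matrices.

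Next I would eliminate $\pl^{(4)}$ in favour of right-parametrix data. By \eqref{left_parametrix}, $\pl^{(4)}(z,t)=\sigma_{2}\,\pr(-z,t)\,\sigma_{2}$, and using \eqref{varminus}, which sends $\z(z)\mapsto\x(z)$ under $z\mapsto -z$, together with the $Z$-symmetry \eqref{Par_symm} to match the sector label, this rewrites $[\pl^{(4)}]^{-1}$ entirely through parabolic cylinder functions of arguments $i\x$ and $\x$. At this stage both factors of $J$ are expressed via the explicit sector formula \eqref{pr_sec} with $Z_{0}$ given by the Wronskian matrix \eqref{Z_0}, so $J$ becomes a fully explicit product.

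The bulk of the work, and the \emph{main obstacle}, is the simplification of this matrix product. After cancelling the conjugating data --- the $e^{\pm it\theta\s}$ dressing, the diagonal scalar factors $(-h_{1}/s_{3})^{\pm\s/2}=h^{\pm\s}$, the $2^{-\s/2}$ block and the constant block $\left(\begin{smallmatrix}\z&1\\1&0\end{smallmatrix}\right)$ --- one is left with products of entries of $Z_{0}(\z)$ and $Z_{0}(\x)$, including the derivative rows $\tfrac{d}{d\z}D_{-\nu-1}(i\z)$ and $\tfrac{d}{d\z}D_{\nu}(\z)$. These derivatives must be removed using the recurrences $\tfrac{d}{dz}D_{\mu}(z)=\tfrac{z}{2}D_{\mu}(z)-D_{\mu+1}(z)$ and $\tfrac{d}{dz}D_{\mu}(z)=\mu D_{\mu-1}(z)-\tfrac{z}{2}D_{\mu}(z)$, which simultaneously shift indices so that only $D_{-\nu},D_{\nu-1},D_{-\nu-1},D_{\nu}$ survive, as in \eqref{jump_entries}. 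The rational prefactor $\left(\frac{z-z_{-}}{z-z_{+}}\right)^{2\nu}$ in $\cB$ appears because the $\left(\z\frac{z-z_{-}}{z-z_{+}}\right)^{\nu\s}$ prefactors of the two parametrices cancel on the diagonal but add on the off-diagonal entries, while the explicit powers of $h$ and the exponentials $e^{\pm i\pi\nu}$, $e^{\frac{2i}{3}t}$ come from collecting the scalar factors together with the definitions \eqref{pr_par}, \eqref{Stokes} of $\nu$, $h_{1}$ and $h$. The relation $\cC(z,t)=\cB(-z,t)$ is immediate from the symmetry \eqref{G_symm} (equivalently, from $z\mapsto -z$ exchanging the roles of $\z$ and $\x$), so only $\cA$ and $\cB$ require the detailed calculation.

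Finally, $\det J=1$ follows without any special-function input. In \eqref{pr_sec} every conjugating factor is unimodular except the constant block $\left(\begin{smallmatrix}\z&1\\1&0\end{smallmatrix}\right)$, whose determinant $-1$ is independent of $z$, and $\det Z_{0}$ is a constant multiple of the Wronskian $W[D_{-\nu-1}(i\z),D_{\nu}(\z)]$, which is itself constant in $\z$. The same constants enter $\pr^{(0)}$ and, through \eqref{left_parametrix}, $\pl^{(4)}$, so they cancel in $J=\pr^{(0)}[\pl^{(4)}]^{-1}$. Alternatively, $\det\Theta$ has trivial jump on $i\bb{R}$ and tends to $1$ at infinity, whence $\det\Theta\equiv 1$ and therefore $\det J=1$.
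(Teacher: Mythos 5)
Your proposal follows essentially the same route as the paper's proof: cancel $\wP$ across $i\bb{R}$ to get $J=\pr^{(0)}[\pl^{(4)}]^{-1}$, express $\pl^{(4)}$ through the right parametrix via \eqref{left_parametrix} and a sector relation, substitute the explicit $Z_{0}$ of \eqref{Z_0} into \eqref{pr_sec}, remove derivatives with the parabolic-cylinder recurrence $\tfrac{z}{2}D_{\nu}(z)+D_{\nu}'(z)=\nu D_{\nu-1}(z)$, multiply out, and obtain $\cC(z,t)=\cB(-z,t)$ and $\det J=1$ (your Wronskian-constancy argument is a mild, valid streamlining of the paper's use of the explicit value $\mathcal{W}=ie^{-i\pi\nu/2}$). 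The only caveat is your parenthetical alternative for $\det J=1$ via the ``trivial jump of $\det\Theta$'', which is circular as stated --- the triviality of that jump \emph{is} the assertion $\det J=1$ --- but this is a dispensable aside and your primary argument stands.
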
 
\begin{proof}
Since $\Psi(z,t)$ has no jump on $i\bb{R}$, $J(z,t)$ can be determined solely in terms of $\pr^{(0)}(z,t)$ and $\pl^{(4)}(z,t)$. One can check the no monodromy condition at the origin,
\begin{gather}
\pr^{(0)}(z,t) \left[ \pl^{(4)}(z,t)\right]^{-1}=  \pr^{(4)}(z,t)\left[ \pl^{(0)}(z,t)\right]^{-1} .
\end{gather}
To ease the notation, we define
\begin{gather}
m(z) := \frac{z-z_{-}}{z-z_{+}},
\end{gather}
and observe that the following identities hold
\begin{gather}
\begin{array}{c}
\theta(z)= \frac{4}{3} z^3 -z = i\frac{\z^2}{4t}- \frac{1}{3}  = -i\frac{\x^2}{4t}+ \frac{1}{3}, \\ \\ (\z^2+\x^2) = -\frac{8 i t}{3}. 
\end{array}\label{rel}
\end{gather}
The function $\pr^{(0)}(z,t)$ is computed by substituting the zeroth sector solution of the parabolic cylinder function \eqref{Z_0} in \eqref{pr_sec},
\begin{gather}
\pr^{(0)}(z) =\left( \zeta(z) \frac{z-z_{-}}{z-z_{+}} \right)^{\nu\sigma_{3}} \left(-\frac{h_{1}}{s_{3}}  \right)^{-\sigma_{3}/2} e^{\frac{it}{3}\sigma_{3}} 2^{-\sigma_{3}/2} \left( \begin{array}{cc} \zeta(z) & 1 \\ 1 & 0 \end{array} \right) Z_{0}(\zeta(z)) \left(-\frac{h_{1}}{s_{3}}  \right)^{\sigma_{3}/2} \times e^{it\theta(z) \s} \nonumber \\ \nonumber \\
=\left[ \begin{array}{cc}   e^{i\pi \nu/2} e^{-\z^2/4} m(z)^{\nu} \z^{\nu} D_{-\nu}(i\z) & \frac{\nu}{h^2} e^{2it/3} e^{\z^2/4} m(z)^{\nu} \z^{\nu} D_{\nu-1}(\z) \\ i h^2 e^{i\pi \nu/2} e^{-2it/3} e^{-\z^2/4} m(z)^{-\nu} \z^{-\nu} D_{-\nu-1}(i\z) & e^{\z^2/4} m(z)^{-\nu} \z^{-\nu} D_{\nu}(\z)  \end{array} \right]. \label{phi_R0}
\end{gather}
The last line is obtained by using \eqref{rel},  and the following identity of parabolic cylinder functions
\begin{gather}  \frac{z}{2}D_{\nu}(z)+D'_{\nu}(z)=\nu D_{\nu-1}(z).
\end{gather}
The left parametrix $\pl^{(4)}$ can be obtained in a similar fashion, first by substituting $ Z_{4} = Z_{0}e^{-2i \pi \nu \s} $ from \eqref{h1} in \eqref{pr_sec} to obtain $\pr^{4}$ and using the relation \eqref{left_parametrix} to obtain $\pl^{(4)}$ as follows
\begin{gather}
\pl^{4} (z,t) = \sigma_{2} \pr^{4}(-z,t) \sigma_{2}  =\sigma_{2} \pr^{0}(-z,t) H_{D}^{-1} \sigma_{2} \nonumber \\ 
 = \left[ \begin{array}{cc}  e^{2 \pi i \nu} e^{\x^2/4} m(z)^{\nu} \x^{-\nu} D_{\nu}(\x) & -i h^2 e^{-3 \pi i \nu/2} e^{-2it/3} e^{\x^2/4} m(z)^{\nu} \x^{-\nu} D_{-\nu-1}(i\x) \\ - e^{2\pi i \nu}\nu h^{-2} e^{2it/3} e^{\x^2/4} m(z)^{-\nu} \x^{\nu} D_{\nu-1}(\x) & e^{-3\pi i \nu/2}e^{-y^2/4} m(z)^{-\nu} \x^{\nu} D_{-\nu}(i\x)  \end{array} \right]. \label{phi_L4}
\end{gather}
To obtain the last line, we substitute the expression for $Z_{0}$ \eqref{Z_0} and simplify the resulting expression using \eqref{rel}. Furthermore,
\begin{gather}
\det\left[ \pr^{(0)}(z,t) \right] =1 \, ; \quad \det\left[ \pl^{(4)}(z,t) \right] =1
\end{gather}
due to the following identity for the Wronskian  determinant of parabolic cylinder functions
\begin{gather}
\mathcal{W}\left[ D_{-\nu-1}(i\z), D_{\nu}(\z) \right] = ie^{-i\pi\nu/2}.
\end{gather}
The jump $J(z,t)$ is then obtained by a  straightforward substitution of \eqref{phi_R0} and \eqref{phi_L4} in \eqref{jump_iR}, and using \eqref{rel}.
\begin{gather}
J(z,t) = \pr^{(0)}(z,t) \left[ \pl^{(4)}(z,t)\right]^{-1} = \left( \begin{array}{cc} \cA (z,t) & \cB (z,t) \\ \cC (z,t) & \cD(z,t) \end{array} \right)
\end{gather}
where
\begin{gather}
\begin{array}{c}
\cA(z,t) =  \z^{\nu} \x^{\nu} e^{\frac{2i}{3} t} \left(  e^{-\pi i \nu}  D_{-\nu}(i\z) D_{-\nu}(i\x) + \nu^2 h^{-4} e^{2\pi i \nu} D_{\nu-1}(\z) D_{\nu-1}(\x)  \right)  \\  \\
\cB(z,t) =  \left(\frac{z-z_{-}}{z-z_{+}} \right)^{2\nu} \z^{\nu} \x^{-\nu} \left( i h^2 e^{- i \pi \nu}  D_{-\nu}(i\z) D_{-\nu-1}(i\x) + \nu h^{-2} e^{2\pi i \nu} D_{\nu-1}(\z) D_{\nu}(\x)  \right) \\ \\
\cC(z,t) =  \left( \frac{z-z_{-}}{z-z_{+}} \right)^{-2\nu} \z^{-\nu} \x^{\nu} \left( i h^{2} e^{-i \pi \nu}  D_{-\nu-1}(i\z) D_{-\nu}(i\x) + \nu h^{-2} e^{2\pi i \nu} D_{\nu}(\z) D_{\nu-1}(\x) \right) = \cB(-z,t) \\ \\
\cD(z,t) =   \z^{-\nu} \x^{-\nu} e^{-\frac{2i}{3} t} \left( - e^{-\pi i \nu} h^4 D_{-\nu-1}(i\z) D_{-\nu-1}(i\x) + e^{2\pi i \nu} D_{\nu}(\z) D_{\nu}(\x).  \right) 
\end{array}\label{def:JUMP_ABCD}
\end{gather}
It is obvious that $\det J(z,t) = 1$. Recall that \eqref{varminus}: $\x(-z,t) = \z(z,t)$ with $\z$ defined in \eqref{variablechange}, $h=\left(-\frac{h_1}{s_{3}}\right)^{1/2}$: \eqref{Stokes} where $h_{1}$, $\nu$ are determined by the Stokes parameters $s_{1}$, $s_{3}$ as in \eqref{h2}, \eqref{pr_par} respectively. One can further verify that the functions $\mathcal{A}, \mathcal{B}, \mathcal{C}, \mathcal{D}$ are analytic in a strip on the imaginary axis and
\begin{gather}
    \lim_{z\rightarrow \pm i \infty} J(z,t) = \mathbb{1}.
\end{gather}
\end{proof}


The two equivalent \rh{}s \ref{rhp:4} , \ref{rhp:5} give rise to two corresponding Malgrange forms. Although the two problems are equivalent, the two corresponding tau function may (and in fact do) differ, but only by a non-vanishing term which we now set up to compute.
Recalling the Malgrange form of \pii{} on $\Sigma$ in \eqref{def:1}:
\begin{equation}
d_{t} \log\tau_{_{\Sigma}} = \int_{\Sigma} \frac{dz}{2\pi i} \Tr\left[ \wP_{-}^{-1} \wP_{-}' \dot{G} G^{-1} \right]. \label{tau_s}
\end{equation}
Similarly on $i\bb{R}$, the \rh{} \ref{rhp:5} satisfies the jump condition $\Theta_{+} = \Theta_{-} J$ and the corresponding Malgrange form \eqref{DEF} is
\begin{equation}
d_{t} \log\tau_{_{i\bb{R}}} = \int_{i\bb{R}} \frac{dz}{2\pi i} \Tr\left[  \Theta_{-}^{-1} \Theta_{-}' \dot{J} J^{-1}   \right]. \label{tau_iR}
\end{equation}
\begin{proposition}\label{proposition:1}
The Malgrange forms corresponding to the \rh{}s on the contours $\Sigma$ and $i\bb{R}$ are related as 
\begin{gather}
d_{t} \log\tau_{_{\Sigma}} = d_{t} \log\tau_{_{i\bb{R}}} - \int_{i\bb{R}}\frac{dz}{2\pi i}  \mathcal{F}(z,t;\nu,h) -\left[ \frac{4i\nu}{3} + \frac{ \nu^2}{t}  \right]-2g\left(\nu, h; t\right),
 \end{gather}
where $\mathcal{F}(z,t;\nu,h)$ is a regular function explicit in terms of parabolic cylinder functions, see \eqref{F_def}.
\end{proposition}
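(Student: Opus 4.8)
The plan is to exploit the factorization \eqref{cR}, which writes $\Psi$ as the product of the reduced solution $\Theta$ (analytic off $i\bb{R}$) and the explicit local parametrix, denoted $P$, with $P=\pr$ on $\Re(z)>0$ and $P=\pl$ on $\Re(z)<0$. Substituting $\wP_{-}=\Theta_{-}P_{-}$ into the Malgrange form \eqref{tau_s} and using cyclicity of the trace, I would separate the integrand into a part linear in $\Theta^{-1}\Theta'$ and a part built purely from the parametrix. The algebraic engine is the identity $\dot{G}G^{-1}=-P_{-}^{-1}\dot{P}_{-}+P_{-}^{-1}\dot{P}_{+}P_{+}^{-1}P_{-}$, obtained by differentiating the parametrix jump $P_{+}=P_{-}G$ along each ray of $\Sigma$ (recall the parametrices carry exactly the jump $G$, since $\Theta$ is continuous across $\Sigma$). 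This produces
\begin{equation*}
\partial_{t}\log\tau_{_{\Sigma}}=\int_{\Sigma}\frac{dz}{2\pi i}\Tr\!\left[\Theta^{-1}\Theta'\left(\dot{P}_{+}P_{+}^{-1}-\dot{P}_{-}P_{-}^{-1}\right)\right]+\int_{\Sigma}\frac{dz}{2\pi i}\Tr\!\left[P_{-}^{-1}P_{-}'\,\dot{G}G^{-1}\right].
\end{equation*}

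For the first integral I would use that $\Theta$ is analytic across every ray of $\Sigma$ inside the open half-planes, so the one-form $\Tr[\Theta^{-1}\Theta'\,\dot{P}P^{-1}]\tfrac{dz}{2\pi i}$ is regular there; by Cauchy's theorem the combined contribution of the $\Sigma$-rays collapses onto the imaginary axis, where the jump of the parametrix logarithmic $t$-derivative across $i\bb{R}$ reproduces $\dot{J}J^{-1}$ with $J=\pr^{(0)}[\pl^{(4)}]^{-1}$ as in \eqref{jump_iR}. The collapsed integral is then precisely $\partial_{t}\log\tau_{_{i\bb{R}}}$ of \eqref{tau_iR}, up to boundary terms localized at the stationary points $z_{\pm}=\pm1/2$ (where the branch factor $m(z)^{\nu\sigma_{3}}$ sits) and at $z=\infty$, the latter controlled by the normalization $\Theta\to\bb{1}$ together with the parametrix asymptotics \eqref{asymp_PC}.

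Since $P$ and $G$ are fully explicit through \eqref{phi_R0}, \eqref{phi_L4} and \eqref{GH}, the second integral is evaluated in closed form. Writing $G_{r}^{(k)}=e^{-it\theta\sigma_{3}}h^{-\sigma_{3}}H_{k}h^{\sigma_{3}}e^{it\theta\sigma_{3}}$ with $t$-independent $H_{k}$, one computes $\dot{G}G^{-1}$ ray by ray; the contributions that survive the ray sum assemble, after collapse onto $i\bb{R}$, into the explicit integrand $-\mathcal{F}(z,t;\nu,h)$ of parabolic cylinder functions, while the remaining pieces are residues at $z_{\pm}$ and at infinity. Throughout I would exploit the symmetry $\pl(z,t)=\sigma_{2}\pr(-z,t)\sigma_{2}$ and $G_{l}(z,t)=\sigma_{2}G_{r}(-z,t)\sigma_{2}$ from \eqref{left_parametrix}, \eqref{G_symm} to halve the computation and to enforce consistency between the two half-planes.

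The main obstacle is the bookkeeping of the localized contributions at the stationary points $z_{\pm}=\pm1/2$. There the non-integer $\nu$ makes $m(z)^{\nu\sigma_{3}}$ genuinely singular, so the integrals over $\Sigma$ and over $i\bb{R}$ are individually divergent and must be regularized in a mutually consistent way. The crux is to show that every $\Theta$-dependent localized term cancels between the two integrals, leaving a finite, purely explicit residue, and then to extract from the factors $\z^{\nu}\x^{\nu}e^{2it/3}$, $m(z)^{2\nu}$ and the asymptotics \eqref{asymp_PC} exactly the constant $-[\,4i\nu/3+2\nu^{2}/t\,]$. Observe that $4i\nu/3+2\nu^{2}/t=\partial_{t}\!\left(\tfrac{4i\nu}{3}t+2\nu^{2}\log t\right)$, so this term is the $t$-derivative of the scalar prefactor separating $\tau_{_{\Sigma}}$ from $\tau_{_{i\bb{R}}}$; verifying that the boundary pieces are $t$-regular and collapse to this closed form, with no residual $\Theta$-dependence, is the delicate final step.
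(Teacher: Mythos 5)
Your plan is, at its core, the paper's own proof run in the opposite direction rather than a genuinely different argument. The paper starts from $\partial_{t}\log\tau_{_{i\bb{R}}}$, substitutes $\Theta_{-}=\wP\pri$ and $J=\pr\pli$, splits the trace into a $\wP$-dependent piece and explicit parametrix pieces, and deforms the $\wP$-dependent piece from $i\bb{R}$ onto $\Sigma$ to recognize $\partial_{t}\log\tau_{_{\Sigma}}$; you start from $\partial_{t}\log\tau_{_{\Sigma}}$, substitute $\wP_{-}=\Theta P_{-}$ and $G=P_{-}^{-1}P_{+}$, split off $\int_{\Sigma}\Tr[P_{-}^{-1}P_{-}'\,\dot{G}G^{-1}]$, and deform the $\Theta$-dependent piece from $\Sigma$ onto $i\bb{R}$. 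Both rest on the same two facts (the parametrices carry exactly the jumps $G$ on $\Sigma$, and $J$ is a ratio of parametrices) and on the same Cauchy-theorem deformation, so the method is the same with mirrored bookkeeping; your factorization identity for $\dot{G}G^{-1}$ is correct.

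Two corrections to your final accounting are needed, though neither is fatal to the scheme. First, the collapse of $\int_{\Sigma}\Tr[\Theta^{-1}\Theta'(\dot{P}_{+}P_{+}^{-1}-\dot{P}_{-}P_{-}^{-1})]$ onto the axis does not give $\partial_{t}\log\tau_{_{i\bb{R}}}$ plus only localized terms: writing $\Theta_{+}=\Theta_{-}J$ on $i\bb{R}$, the jump of $\Tr[\Theta^{-1}\Theta'\,\dot{P}P^{-1}]$ across the axis equals $-\Tr[\Theta_{-}^{-1}\Theta_{-}'\,\dot{J}J^{-1}]+\Tr[J^{-1}J'\,\dot{\pl}\pli]$ (with $\pl=\pl^{(4)}$), so a full, non-localized --- but explicit --- integral over $i\bb{R}$ is also generated, which must be absorbed into $\mathcal{F}$. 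Second, your ``main obstacle'' is a phantom: the parametrix \eqref{right_parametrix} contains the combination $\left(\zeta(z)\,\tfrac{z-z_{-}}{z-z_{+}}\right)^{\nu\sigma_{3}}$, and since $\zeta(z)$ in \eqref{variablechange} has a \emph{simple zero} at $z_{+}$ (indeed $\zeta^{2}\propto(z-\tfrac12)^{2}(z+1)$), this factor is analytic and invertible at $z_{\pm}$; consequently $\pr$, $\pl$, and $\wP=\Theta\,\p_{r,l}$ are bounded there, both Malgrange forms converge absolutely, and no regularization or residue extraction at $z_{\pm}$ or at $\infty$ occurs. The constant $-\left[\tfrac{4i\nu}{3}+\tfrac{2\nu^{2}}{t}\right]$ is not a residue: in the paper it is the value of the convergent explicit ray integrals $\int_{\Sigma_{L}}\Tr[G_{l}'G_{l}^{-1}(\plm^{-1}\dot{\plm}+\dot{G}_{l}G_{l}^{-1})]+\int_{\Sigma_{R}}\Tr[G_{r}'G_{r}^{-1}(\prm^{-1}\dot{\prm}+\dot{G}_{r}G_{r}^{-1})]$, computed directly from \eqref{phi_R0}, \eqref{GH} and the symmetry \eqref{G_symm}; in your scheme it would emerge from the analogous direct evaluation of $\int_{\Sigma}\Tr[P_{-}^{-1}P_{-}'\,\dot{G}G^{-1}]$. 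With these two adjustments your decomposition does yield the proposition.
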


\begin{proof}\footnote[4]{In the proof, we drop the $z$, $t$ dependence for the ease of writing. All the functions here on depend on $z$, $t$ unless specified. One should pay attention to the total derivative w.r.t $t$ as it involves a derivative w.r.t $z$ due to the transformation \eqref{cov}. We thank Nikolai Iorgov and Yuri Zhuravlev for bringing this subtlety with the $t$ derivative to our attention}
We begin by computing the following trace on $i\mathbb{R}$
\begin{gather}\label{Tr_RJ}
\Tr \left\lbrace  \Theta_{-}^{-1} \Theta_{-}' \dot{J} J^{-1}  \right\rbrace. 
\end{gather}
Computing \eqref{Tr_RJ} term by term using \eqref{cR}: $\Theta_{-} = \wP \pri $,
\begin{gather}  
\Theta_{-}^{-1} \Theta_{-}' = (\wP \pri)^{-1} (\wP \pri)' 
= \pr \wP^{-1} \left( \wP' \pri - \wP \pri \pr' \pri  \right) \nonumber \\
= \pr \left( \wP^{-1} \wP' - \pri \pr'  \right) \pri . \label{4.11}
\end{gather}
Since \eqref{jump_iR}: $J= \pr \pli $,
\begin{gather}
\dot{J} J^{-1} = \frac{\partial}{\partial t}{(\pr \pli)} (\pr \pli)^{-1} 
= \left(  \dot{\pr} \pli - \pr \pli \dot{\pl} \pli  \right) \pl \pri \nonumber \\
= -\pr \Delta \left( \p^{-1} \dot{\p}  \right) \pri, \label{4.12}
\end{gather}
where 
\begin{equation}
\Delta (\p^{-1} \dot{\p} ) = \pli \dot{\pl} -\pri \dot{\pr}. \nonumber
\end{equation}
Substituting \eqref{4.11} and \eqref{4.12} in \eqref{Tr_RJ} and using cyclicity of trace,
\begin{gather}
\Tr \left\lbrace  \Theta_{-}^{-1} \Theta_{-}' \dot{J} J^{-1}  \right\rbrace \ = \Tr \left\lbrace  \left( - \wP^{-1} \wP' + \pri \pr'   \right) \Delta \left(\p^{-1} \dot{\p}   \right)  \right\rbrace. \label{4.13} 
\end{gather}
Since the term $\pri \pr'   \Delta \left( \p^{-1} \dot{\p} \right) $ is integrated on $i\bb{R}$ in \eqref{tau_iR}, 
{\small \begin{gather}
\int_{i\bb{R}} \frac{dz}{2\pi i }\Tr \left[ \pr^{-1} \pr'  \Delta \left( \p^{-1}\dot{\p}  \right)  \right] =  \int_{i\bb{R}} \frac{dz}{2\pi i} \Tr\left[ \left(\pr^{(0)}\right)^{-1} \left(\pr^{(0)}\right)'  \left\lbrace \left(\pl^{(4)}\right)^{-1} \dot{\pl}^{(4)}  - \left(\pr^{(0)}\right)^{-1} \dot{\pr}^{(0)}   \right\rbrace  \right]
\end{gather}}
with $\pr^{(0)}$ defined in \eqref{phi_R0}, $\pl^{(4)}$ in \eqref{phi_L4}. We collect the explicit terms and compute them in the end. Since $\wP$ has no jump on $i\mathbb{R}$, using Cauchy theorem
\begin{gather}
-\int_{i\mathbb{R}} \frac{d z}{2\pi i} \Tr  \left\lbrace  \wP^{-1} \wP' \Delta \left(  \p^{-1} \dot{\p}  \right) \right\rbrace = -\int_{i\mathbb{R}} \frac{d z}{2\pi i} \Tr \Delta  \left(  \wP^{-1} \wP' \left( \p^{-1} \dot{\p}   \right) \right) \nonumber \\
 = \int_{\Sigma} \frac{dz}{2\pi i} \Tr \Delta  \left( \wP^{-1} \wP' \left( \p^{-1} \dot{\p}  \right) \right) \nonumber \\
 = \int_{\Sigma_L} \frac{dz}{2\pi i} \Tr \Delta  \left(  \wP^{-1} \wP' \left(  \p^{-1} \dot{\p}   \right) \right) + \int_{\Sigma_R} \frac{dz}{2\pi i}  \Tr \Delta  \left( \wP^{-1} \wP' \left(  \p^{-1} \dot{\p}  \right) \right), \label{4.14} 
\end{gather}
where $\Sigma_{L,R}$ are $\Sigma$ restricted to the left and right half-planes respectively. Since $\Psi$ has jumps on $\Sigma_{L}$,
\begin{gather}
 \int_{\Sigma_{L}} \frac{dz}{2\pi i}\Tr \Delta\left( \wP^{-1} \wP' \left(  \p^{-1} \dot{\p}   \right) \right) = \int_{\Sigma_{L}} \frac{dz}{2\pi i}\Tr \left\lbrace \wP_{+}^{-1} \wP_{+}' \left( \plp^{-1} \dot{\plp}  \right) - \wP_{-}^{-1} \wP_{-}' \left( \plm^{-1} \dot{\plm}   \right) \right\rbrace.  \label{4.15}
 \end{gather}
similarly on $\Sigma_{R}$
\begin{gather}
\int_{\Sigma_{R}} \frac{dz}{2\pi i} \Tr \Delta\left(  \wP^{-1} \wP' \left(  \p^{-1} \dot{\p}   \right) \right) = \int_{\Sigma_{R}} \frac{dz}{2\pi i} \Tr \left\lbrace  \wP_{+}^{-1} \wP_{+}' \left( \prp^{-1} \dot{\prp}  \right) - \wP_{-}^{-1} \wP_{-}' \left( \prm^{-1} \dot{\prm}   \right) \right\rbrace.
\end{gather}
In order to estimate \eqref{4.14}, we begin by computing the integrand on $\Sigma_L$. Computing \eqref{4.15} term by term using \eqref{J_Gamma}: $\Psi_{+} =\Psi_{-} G_{l}$,
\begin{gather}
 \wP_{+}^{-1} \wP_{+}' = (\wP_{-} G_{l})^{-1} (\wP_{-} G_{l})' 
 = G_{l}^{-1} \wP_{-}^{-1} (\wP_{-}' G_{l} + \wP_{-} G_{l}' ) \nonumber \\
 = G_{l}^{-1} (\wP_{-}^{-1} \wP_{-}'  +  G_{l}' G_{l}^{-1} )G_{l}. \label{4.16}
\end{gather}
Since \eqref{G_rl}: $\plp =\plm  G_{l}$,
\begin{gather}
\plp^{-1} \dot{\plp}  = G_{l}^{-1} \plm^{-1} \left( \dot{\plm}  G_{l} + \plm  \dot{G}_{l} \right)  \nonumber \\
= G_{l}^{-1}  \left( \plm^{-1} \dot{\plm}   +  \dot{G}_{l} G_{l}^{-1}     \right) G_{l}. \label{4.17}
\end{gather}
The product of \eqref{4.16} and \eqref{4.17} under the trace reads
\begin{gather}
\Tr \left\lbrace  \wP_{+}^{-1} \wP_{+}' \left( \dot{\plp} \plp^{-1}  \right) \right\rbrace = \Tr \left[ \left(\wP_{-}^{-1} \wP_{-}'  +  G_{l}' G_{l}^{-1} \right) \left( \plm^{-1} \dot{\plm}   +  \dot{G}_{l} G_{l}^{-1}     \right)  \right]. \label{4.18}
\end{gather}
Substituting \eqref{4.18} in \eqref{4.15},
\begin{gather}
\Tr \Delta\left(  \wP^{-1} \wP' \left( \dot{\p} \p^{-1}  \right) \right) = \Tr\left[ \wP_{-}^{-1} \wP_{-}' \dot{G}_{l} G_{l}^{-1} + G_{l}' G_{l}^{-1}  \left( \plm^{-1} \dot{\plm}   +  \dot{G}_{l} G_{l}^{-1}  \right) \right] \label{4.19}
\end{gather}
A parallel computation for $\Sigma_R$ gives 
\begin{gather}
\Tr \Delta\left(  \wP^{-1} \wP' \left( \dot{\p} \p^{-1}  \right) \right) = \Tr\left[ \wP_{-}^{-1} \wP_{-}' \dot{G}_{r} G_{r}^{-1} + G_{r}' G_{r}^{-1}  \left( \prm^{-1} \dot{\prm}   +  \dot{G}_{r} G_{r}^{-1}     \right) \right]. \label{4.19R}
\end{gather}
 Summing the terms \eqref{4.19} and \eqref{4.19R}, we obtain that
\begin{gather}
d_{t} \log\tau_{_{i\bb{R}}} = \int_{i\bb{R}} \frac{dz}{2\pi i} \Tr\left[\Theta_{-}^{-1} \Theta_{-}' \dot{J}J^{-1}\right] = \int_{\Sigma_{R}} \frac{dz}{2\pi i} \Tr \left[ \wP_{-}^{-1} \wP_{-}' \dot{G}_{r} G_{r}^{-1} \right] + \int_{\Sigma_{L}}\frac{dz}{2\pi i} \Tr\left[ \wP_{-}^{-1} \wP_{-}' \dot{G}_{l} G_{l}^{-1} \right] \nonumber \\ 
 + \int_{\Sigma_{L}}\frac{dz}{2\pi i} \Tr \left[ G_{l}' G_{l}^{-1}  \left( \plm^{-1} \dot{\plm}   +  \dot{G}_{l} G_{l}^{-1}     \right)   \right] + \int_{\Sigma_{R}} \frac{dz}{2\pi i} \Tr\left[ G_{r}' G_{r}^{-1}  \left( \prm^{-1} \dot{\prm}   +  \dot{G}_{r} G_{r}^{-1}     \right) \right] \nonumber \\
 + \int_{i\bb{R}} \frac{dz}{2\pi i} \Tr\left[  \pr^{-1} \pr' \Delta \left( \p^{-1}\dot{\p}  \right)  \right] \nonumber \\
 = \int_{\Sigma} \frac{dz}{2\pi i} \Tr \left[ \wP_{-}^{-1} \wP_{-}' \dot{G} G^{-1} \right] + \int_{i\bb{R}} \frac{dz}{2\pi i} \Tr\left[ \pr^{-1} \pr'  \Delta \left( \p^{-1}\dot{\p}  \right)  \right] \nonumber \\ + \int_{\Sigma_{L}} \frac{dz}{2\pi i} \frac{dz}{2\pi i} \Tr \left[ G_{l}' G_{l}^{-1}  \left( \plm^{-1} \dot{\plm}   +  \dot{G}_{l} G_{l}^{-1}     \right)   \right] + \int_{\Sigma_{R}} \frac{dz}{2\pi i} \Tr\left[ G_{r}' G_{r}^{-1}  \left( \prm^{-1} \dot{\prm}   +  \dot{G}_{r} G_{r}^{-1}     \right) \right] 
  \nonumber \\
= d_{t} \log\tau_{_{\Sigma}} + \int_{i\bb{R}} \frac{dz}{2\pi i} \Tr\left[ \pr^{-1} \pr'  \Delta \left( \p^{-1}\dot{\p}  \right)  \right] \nonumber \\ + \int_{\Sigma_{L}} \frac{dz}{2\pi i} \Tr \left[ G_{l}' G_{l}^{-1}  \left( \plm^{-1} \dot{\plm}   +  \dot{G}_{l} G_{l}^{-1}     \right)   \right] + \int_{\Sigma_{R}} \frac{dz}{2\pi i} \Tr\left[ G_{r}' G_{r}^{-1}  \left( \prm^{-1} \dot{\prm}   +  \dot{G}_{r} G_{r}^{-1}     \right) \right].
\end{gather}
Notice that $\p_{r,l}$ and $G_{r,l}$ are completely determined in terms of parabolic cylinder functions. The final expression is
\begin{gather}
d_{t} \log\tau_{_{\Sigma}} = d_{t} \log\tau_{_{i\bb{R}}} - \int_{i\bb{R}}\frac{dz}{2\pi i} \Tr\left[ \pr^{-1} \pr'  \Delta \left( \p^{-1}\dot{\p}  \right)  \right] \nonumber \\ - \int_{\Sigma_{L}} \frac{dz}{2\pi i} \Tr \left[ G_{l}' G_{l}^{-1}  \left( \plm^{-1} \dot{\plm}   +  \dot{G}_{l} G_{l}^{-1}     \right)   \right] - \int_{\Sigma_{R}} \frac{dz}{2\pi i} \Tr\left[ G_{r}' G_{r}^{-1}  \left( \prm^{-1} \dot{\prm}   +  \dot{G}_{r} G_{r}^{-1}     \right) \right].  \label{explicit_phi}
\end{gather}

The following can be said about the explicit terms in \eqref{explicit_phi}.
\begin{itemize}
 \item We can completely determine the integrals on $\Sigma_{R,L}$. The symmetry relations \eqref{left_parametrix}, \eqref{G_symm}
 imply that
 \begin{gather}
 \int_{\Sigma_{L}} \frac{dz}{2\pi i} \Tr \left[ G_{l}' G_{l}^{-1}  \left( \plm^{-1} \dot{\plm}   +  \dot{G}_{l} G_{l}^{-1}     \right)   \right] = \int_{\Sigma_{R}} \frac{dz}{2\pi i} \Tr \left[ G_{r}' G_{r}^{-1}  \left( \prm^{-1} \dot{\prm}   +  \dot{G}_{r} G_{r}^{-1}     \right)   \right]. \label{sym}
 \end{gather}
Furthermore,  \eqref{h2} implies that the jump $G_{r}^{(k)}$ in \eqref{GH} is lower triangular for $k = 0,2$; upper triangular for $k = 1,3$; diagonal and constant for $k=4$. Therefore,
\begin{gather}
\Tr\left[ G_{r}' G_{r}^{-1}  \dot{G}_{r} G_{r}^{-1} \right]= \Tr\left[ G_{l}' G_{l}^{-1}  \dot{G}_{l} G_{l}^{-1} \right]=0.
\end{gather}

We now proceed to compute the following term in \eqref{explicit_phi}
\begin{gather}
\int_{\Sigma_{R}} \frac{dz}{2\pi i} \Tr \left[ G_{r}' G_{r}^{-1}  \prm^{-1} \dot{\prm}    \right] = \sum_{k=1}^5\int_{\Sigma_{k}} \frac{dz}{2\pi i}  \Tr \left[ (G_{r}^{(k)})' (G_{r}^{(k)})^{-1}(\prm^{(k-1)})^{-1}  \dot{\prm}^{(k-1)} \right]. \label{4.24}
\end{gather}
In each sector, $\pr$ and $G_{r}$ can be computed starting from $\pr^{(0)}$ in \eqref{phi_R0}, and the jumps in \eqref{GH}. A lengthy but straightforward computation yields 
\begin{gather}
 \int_{\Sigma_{R}}\frac{dz}{2\pi i} \Tr\left[  G_{r}' G_{r}^{-1}  \left( \prm^{-1} \dot{\prm}   +  \dot{G}_{r} G_{r}^{-1}     \right)   \right]  = \left[ \frac{2i\nu}{3} + \frac{\nu^2}{2t}  \right]+g\left(\nu, h; t\right),
\end{gather}
where $g(.)$ is a function depending on $t$ with the following behaviour for $t\to -\infty$
\begin{align*}
    g(\nu,h;t)=-\frac{2 i \left(3 \nu ^4+33 \nu ^2-1\right) \nu }{81 t^2}+\mathcal{O}((-t)^{-3}).
\end{align*}
The relation \eqref{sym} then implies,
{\small \begin{align}
 \int_{\Sigma_{L}} \frac{dz}{2\pi i}\Tr \left[ G_{l}' G_{l}^{-1}  \left( \plm^{-1} \dot{\plm}   +  \dot{G}_{l} G_{l}^{-1}     \right)   \right] &+ \int_{\Sigma_{R}} \frac{dz}{2\pi i} \Tr\left[ G_{r}' G_{r}^{-1}  \left( \prm^{-1} \dot{\prm}   +  \dot{G}_{r} G_{r}^{-1}     \right) \right] \nonumber\\
 &= \left[ \frac{4i\nu}{3} + \frac{\nu^2}{t}  \right]+ 2g\left(\nu, h; t\right).
\end{align}}

\item The remaining explicit term in \eqref{explicit_phi}
{\small \begin{gather}
\int_{i\bb{R}} \frac{dz}{2\pi i} \Tr\left[ \pr^{-1} \pr'  \Delta \left( \p^{-1}\dot{\p}  \right)  \right] \equiv \int_{i\bb{R}} \frac{dz}{2\pi i} \Tr\left[ \left(\pr^{(0)}\right)^{-1} \left(\pr^{(0)}\right)'  \left\lbrace \left(\pl^{(4)}\right)^{-1} \dot{\pl}^{(4)}  - \left(\pr^{(0)}\right)^{-1} \dot{\pr}^{(0)}   \right\rbrace  \right]. \label{exp} 
\end{gather}}
The functions $\pr^{(0)}$ and $\pl^{(4)}$ depend on $z$ through $\zeta(z,t)$ as in \eqref{variablechange} and $\x (z,t)$ as in \eqref{varminus} respectively. In order to solve the integral, we need to compute integrals of the form
\begin{equation}
\int \frac{dz}{2\pi i} D_{\nu}(\z) D_{\mu}(\x) D_{-\rho}(i \z) D_{-\sigma}(i\x),
\end{equation}
which is not exactly solvable. The expression \eqref{exp} is however, explicit. Defining a function $\mathcal{F}$ as
\begin{gather}
  \mathcal{F}(z,t;\nu,h) :=\Tr\left[\pr^{-1} \pr' \left( \pl^{-1}\dot{\pl}  - \pr^{-1}  \dot{\pr}  \right)  \right], \label{F_def}
\end{gather}
The final expression in \eqref{explicit_phi} reads
\begin{gather}
d_{t} \log\tau_{_{\Sigma}} = d_{t} \log\tau_{_{i\bb{R}}} - \int_{i\bb{R}} \frac{dz}{2\pi i}  \mathcal{F}(z, t;\nu,h) -\left[ \frac{4i\nu}{3} + \frac{ \nu^2}{t}  \right]-2g\left(\nu, h; t\right). \label{final1}
\end{gather}
\end{itemize}
\end{proof}

\section{Integrable kernel and Fredholm determinant}\label{section:4}

Up to this point, we started with the \rh{} of \pii{} in fig. \ref{fig:2}, used the description of the local parametrices in terms of parabolic cylinder functions in the subsection \ref{subsection:2.1} to define a \rh{} on $i\bb{R}$ \eqref{J2} in section \ref{section:3}. We then showed that the corresponding Malgrange forms are related in proposition \ref{proposition:1}. Our goal now
reduces to expressing $\tau_{_{i\bb{R}}}$ as a Fredholm determinant. 

It is known that a jump $J(z,t)\in SL(2, \bb{C})$ on non-intersecting contours can be expressed in terms of lower and upper triangular matrices called the LULU decomposition and the corresponding \tfn{} can then be written as a Fredholm determinant of an integrable operator \cite{MalB}. Here, we modify the construction in \cite{MalB} by using LDU decomposition\footnotetext[5]{The author thanks A.Its for suggesting LDU decomposition.} instead, which then gives us a simpler kernel. In this section, we
\begin{itemize}
\item[1.] transform RHP:\ref{rhp:5} on to a set of two parallel lines with lower and upper triangular jumps using the LDU decomposition,
\item[2.] formulate the \tfn{} on the set of parallel lines, call it $\tau_{_{LU}}$ as a Fredholm determinant of an integrable operator, and
\item[3.] prove that the Malgrange forms on the contours LU and $i \bb{R}$ coincide. 
\end{itemize}
\subsection{LU decomposition} \label{subsection:4.1}

The \rh{} on $i\bb{R}$ can be transformed on to a set of two parallel lines with jumps that are upper and lower triangular respectively. 
We decompose the jump $J(z,t)$ \eqref{jump_iR} into lower, diagonal and upper triangular matrices, called the LDU decomposition \cite{RHPdeift}, which recasts the \rh{}:\ref{rhp:5} on to a set of three parallel lines.
\begin{gather}
J(z,t)= \left( \begin{array}{cc}
\cA(z,t) & \cB(z,t) \\ \cC(z,t) & \cD(z,t)
\end{array} \right) = \left( \begin{array}{cc}
1 & 0 \\ \frac{\cC(z,t)}{\cA(z,t)} & 1
\end{array} \right)\left( \begin{array}{cc}
\cA(z,t) & 0 \\ 0 & \frac{1}{\cA(z,t)}
\end{array} \right) \left( \begin{array}{cc}
1 & \frac{\cB(z,t)}{\cA(z,t)} \\ 0 & 1
\end{array} \right) \nonumber\\
 := F_{1}(z,t) F_{2}(z,t) F_{3}(z,t). \label{LDU}
\end{gather}
\begin{figure}[H] 
\includegraphics[width=8cm]{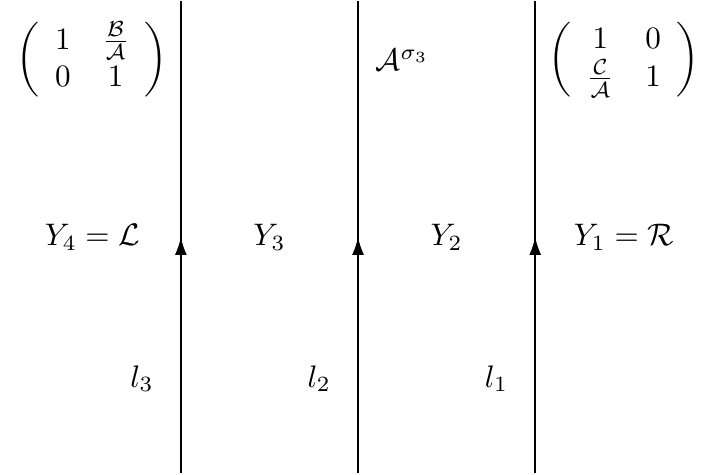}
\centering
\caption{LDU decomposition \label{fig:5}}
\end{figure}
The function $Y(z,t)$ then solves the following \rh{}.
\begin{rhp} \label{rhp:6}
\begin{itemize}
  \item $Y(z,t)$ is a piecewise analytic in $\bb{C}\backslash \left( \cup_{i=1}^{3} l_{i} \right)$.
\item On each line $l_i$ in fig. \ref{fig:5}, the following jump condition holds
\begin{gather}
Y_{i+1}(z,t) = Y_{i}(z,t) F_{i}(z,t),\label{F_i}
\end{gather}
with the identification
\begin{gather}
Y_{4}(z,t) = \Theta_{+}(z,t) \, ; \quad Y_{1}(z,t) = \Theta_{-}(z,t). \label{ide}
\end{gather}
$\Theta_{\pm}$ are defined in \eqref{cR}.
\item $\lim_{z\rightarrow \infty} Y(z,t) = \bb{1}$.
\end{itemize}
\end{rhp}
The \rh{}:\ref{rhp:6} can be further transformed with the observation that the function $\varphi(z,t)^{\s}$ defined as
\begin{gather}
\varphi (z,t) := \exp\left[\int_{i\bb{R}} \frac{dw}{2\pi i} \frac{\log \cA(w,t)}{z-w}\right], \label{varphi}
\end{gather}
solves \rh{} on $l_{2}$ with the diagonal jump $\cA^{\s}$ locally with $\cA$ defined in \eqref{jump_entries}.
The ratio of $Y(z,t)$, $\varphi(z,t)^{\s}$ 
\begin{gather}
\wY_{i}(z,t) :=Y_{i}(z,t) \varphi(z,t)^{-\sigma_{3}} \label{wY}
\end{gather} 
is such that $\wY (z,t)$ jumps only on $l_{1} \cup l_{3}$ and solves the following \rh{}.
\begin{rhp}\label{rhp:8}
\begin{itemize}
\item 
$\wY(z,t)$ is piece-wise analytic in $\bb{C}\backslash(l_{1}\cup l_{3})$.
\begin{figure}[H] 
\includegraphics[width=8cm]{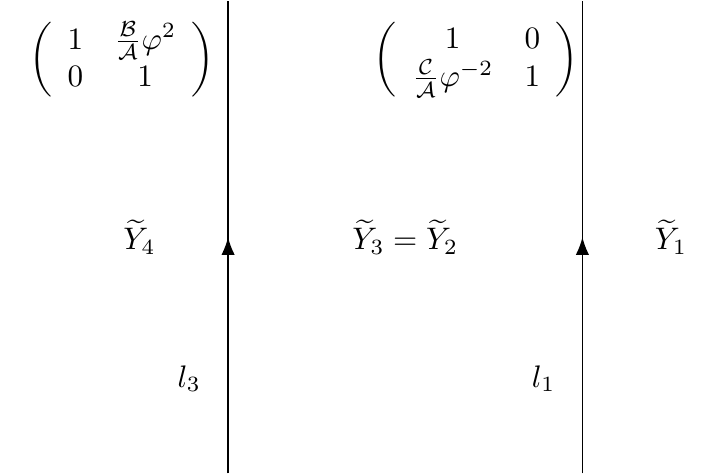}
\centering
\caption{\rh{} with lower and upper triangular jumps. \label{fig:4}}
\end{figure}
\item The following jump conditions are valid on the contours $l_{i}$, $i=1,3$
\begin{gather}
\wY_{i+1}(z,t) = \wY_{i}(z,t) \wF_{i}(z,t) \label{tF_1}
\end{gather}
where 
\begin{gather} \label{wF}
\wF_{1}(z,t) = \left( \begin{array}{cc}
1 & 0 \\ \frac{\cC(z,t)}{\cA(z,t)} \varphi(z,t)^{2} & 1
\end{array} \right)\, ; \quad 
 \wF_{3}(z,t) = \left( \begin{array}{cc}
1 & \frac{\cB(z,t)}{\cA(z,t)}\varphi(z,t)^{-2} \ \\ 0 & 1
\end{array} \right).
\end{gather}
Note that $\wY(z,t)$ has no jump on $l_{2}$, implying that $\wY_{3}(z,t) = \wY_{2}(z,t)$. 
\end{itemize}
\end{rhp}
The \rh{}:\ref{rhp:8} in fig. \ref{fig:4} is of the 'integrable' type and its solvability is determined by the invertibility of an integrable operator i.e, its \tfn{} is the Fredholm determinant of an integrable operator.



\subsection{Integrable kernel}\label{subsection:4.2}
\begin{proposition}\label{proposition:2}
The \tfn{} on $l_{1}\cup l_{3}$ denoted by $\tau_{_{LU}}$ is a Fredholm determinant of an integrable operator 
\begin{gather}
\tau_{_{LU}} = \det \left[ \bb{1}_{_{L^2(i\bb{R})}} - \widetilde{\mathcal{K}} \right]
\end{gather}
where 
\begin{gather}
\left(\widetilde{\mathcal{K}}h\right)(z) = \frac{\cC(z,t)}{\cA(z,t)}\int_{i\bb{R}}\frac{dw}{2\pi i } \int_{i\bb{R}+ \epsilon} \frac{d\widetilde{w}}{2\pi i} \frac{\varphi_{+}^{2}(w)  \varphi_{+}^{-2}(\widetilde{w})}{(z-\widetilde{w})(\widetilde{w}-w)} \cA(\widetilde{w},t) \cB(\widetilde{w},t) h(\widetilde{w}).
\end{gather}
The functions $\cA$, $\cB$, $\cC$ are defined in \eqref{jump_entries} and $\varphi_{+}$ is the positive (left of the imaginary axis) boundary value of \eqref{varphi}.
\end{proposition}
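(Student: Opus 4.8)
The plan is to recognize that RHP~\ref{rhp:8} is of integrable (IIKS) type and to feed it into the standard correspondence between integrable Riemann--Hilbert problems and Fredholm determinants of integrable operators \cite{IIKS,Deift,MalB}, and then to exploit the off-diagonal structure of the jumps on $l_{1}$ and $l_{3}$ to collapse the resulting $2\times 2$ operator to a scalar one on $i\bb{R}$. First I would write the jumps \eqref{wF} in IIKS form $\wF = \bb{1} - 2\pi i\, f\, g^{T}$: on $l_{1}$ the jump is purely lower triangular, so one may take $f = e_{2}$, $g = \tfrac{i}{2\pi}\tfrac{\cC}{\cA}\varphi^{2}\, e_{1}$, while on $l_{3}$ it is purely upper triangular, so $f = e_{1}$, $g = \tfrac{i}{2\pi}\tfrac{\cB}{\cA}\varphi^{-2}\, e_{2}$ (with $e_{1},e_{2}$ the standard basis of $\bb{C}^{2}$). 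The ``no-diagonal'' condition $g^{T}f = 0$ required for integrability holds automatically on each line by triangularity. The IIKS/Malgrange theory \cite{IIKS,Deift,MalB} then identifies the Malgrange form of RHP~\ref{rhp:8}, that is $\partial_{t}\log\tau_{_{LU}}$, with the logarithmic $t$-derivative of $\det(\bb{1}-K)$, where $K$ is the integrable operator on $L^{2}(l_{1}\cup l_{3})$ with kernel $K(z,w) = \tfrac{f(z)^{T} g(w)}{z-w}$; this yields $\tau_{_{LU}} = \det(\bb{1}-K)$ as the chosen representative of the \tfn{}.

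The second step is the reduction to a scalar operator. Because the $l_{1}$ jump contributes only the matrix entry $(2,1)$ and the $l_{3}$ jump only $(1,2)$, the diagonal blocks $K|_{l_{1}\to l_{1}}$ and $K|_{l_{3}\to l_{3}}$ vanish identically, so $K$ has the off-diagonal block form $K = \left(\begin{smallmatrix} 0 & K_{13}\\ K_{31} & 0\end{smallmatrix}\right)$ with $K_{13}\colon L^{2}(l_{3})\to L^{2}(l_{1})$ and $K_{31}\colon L^{2}(l_{1})\to L^{2}(l_{3})$. The Schur-complement identity for block determinants then gives $\det(\bb{1}-K) = \det(\bb{1}-K_{13}K_{31})$, the Fredholm determinant of a single scalar operator which, after identifying $l_{1}$ and $l_{3}$ with shifted copies of $i\bb{R}$, becomes the operator $\widetilde{\mathcal{K}}$ on $L^{2}(i\bb{R})$.

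It remains to compute the composed kernel and bring it to the stated form. Writing out $K_{13}K_{31}$ produces the intermediate integration variable $w$, a product of two Cauchy kernels, and the factors $\tfrac{\cC}{\cA}\varphi^{2}$ and $\tfrac{\cB}{\cA}\varphi^{-2}$ coming from the two jumps. I would then massage this expression as follows: apply the partial-fraction identity $\tfrac{1}{(z-w)(w-\widetilde w)} = \tfrac{1}{z-\widetilde w}\big(\tfrac{1}{z-w}+\tfrac{1}{w-\widetilde w}\big)$ to extract the overall factor $\tfrac{1}{z-\widetilde w}$; conjugate the scalar operator by the multiplication operator $\varphi^{2}$, which leaves the determinant invariant and relocates the $\varphi^{\pm 2}$ factors; and use the scalar Riemann--Hilbert problem $\varphi_{+} = \cA\,\varphi_{-}$ solved by \eqref{varphi} to move the $\cA$-factors onto the $\widetilde w$-variable, producing the combination $\cA\cB$ at $\widetilde w$ and the lone $\varphi_{+}^{2}(w)$ inside the $w$-integral. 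The $\pm\epsilon$ shifts of the contours fix the boundary values $\varphi_{+}$ and keep the Cauchy integrals away from their diagonal singularities.

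The main obstacle is precisely this last bookkeeping step: matching the exact placement of $\cA,\cB,\cC$ and of $\varphi^{\pm 2}$, and the splitting of the two Cauchy kernels, demands careful use of the defining relation for $\varphi$ together with the contour deformations, and a factor is easily misplaced. A secondary technical point that must be verified is that $K_{13}$ and $K_{31}$ are Hilbert--Schmidt, so that their composition is trace class and the Fredholm determinant is well defined; this follows from the decay of the parabolic cylinder functions $D_{-\nu},D_{\nu-1}$ (and their companions) entering $\cA,\cB,\cC$ along the imaginary axis, which supplies the integrability of the kernel as $z\to i\infty$.
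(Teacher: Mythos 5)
Your proposal is correct and follows essentially the same route as the paper: you write the triangular jumps $\wF_{1},\wF_{3}$ in IIKS form $\bb{1}-2\pi i\,f g^{T}$ (the paper packages the same data into a single pair $f,g$ with characteristic functions $\chi_{1},\chi_{3}$), exploit the off-diagonal block structure to reduce $\det(\bb{1}-K)$ to $\det(\bb{1}-{\mathcal K}_{13}{\mathcal K}_{31})$, and then deform the contours to $i\bb{R}\pm\epsilon$ using $\varphi_{+}=\cA\,\varphi_{-}$ to reach the stated kernel, exactly as in the paper (your partial-fraction and conjugation steps are unnecessary, since the composed kernel already carries the product of Cauchy factors appearing in the final formula, but they are harmless). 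The only cosmetic difference is the trace-class justification, which the paper gets from analyticity and normalization of $\varphi$ rather than decay of the parabolic cylinder functions.
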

\begin{proof}
Let us recall the jumps in \eqref{wF}
\begin{gather}
\wF(z,t) = \begin{cases} 
\wF_{1}(z,t) = \left( \begin{array}{cc}
1 & 0 \\ \frac{\cC(z,t)}{\cA(z,t)} \varphi(z,t)^{2} & 1
\end{array} \right)\, ; \textrm{on}\, l_{1} \\
 \wF_{3}(z,t) = \left( \begin{array}{cc}
1 & \frac{\cB(z,t)}{\cA(z,t)}\varphi(z,t)^{-2} \ \\ 0 & 1
\end{array} \right)\, ; \textrm{on}\, l_{3} \end{cases}. \label{tF}
\end{gather}
We define the functions
\begin{gather}
f(z,t) = \frac{1}{2\pi i} \left( \begin{array}{c}
\frac{\cB(z,t)}{\cA(z,t)} \chi_{3}(z) \\  \\
\frac{\cC(z,t)}{\cA(z,t)}\chi_{1}(z)
\end{array} \right) \quad ; \quad g(z,t) = \left( \begin{array}{c}
 \varphi(z,t)^2 \chi_{1}(z) \\ \\
 \varphi(z,t)^{-2} \chi_{3}(z)
\end{array} \right) \label{fg}
\end{gather}
where $\chi_{1}(z)$, $\chi_{3}(z)$ denote the characteristic functions on the contours $l_{1}$, $l_{3}$ respectively. The jump $\wF(z,t)$ can be written in terms of \eqref{fg} as
\begin{equation}
\wF = 1 - 2\pi i f(z) g^{T}(z),
\end{equation}
and clearly $f^{T}(z) g(z) =0$. The associated integrable kernel is then 
\begin{gather}
K(z,w) = \frac{f^{T}(z) g(w)}{z-w} = \nonumber \\
= \frac{1}{2\pi i (z-w)}\left(\begin{array}{cc}
\chi_{1}(z) & \chi_{3}(z)
\end{array}  \right) \left(\begin{array}{ccc}
0 & & \frac{\cC(z,t)}{\cA(z,t)}\varphi^{-2}(w,t) \\ \frac{\cB(z,t)}{\cA(z,t)}\varphi^{2}(w,t) & & 0
\end{array}  \right) \left(\begin{array}{c}
\chi_{1}(w) \\ \chi_{3}(w)
\end{array}  \right) \nonumber \\
\equiv \left(\begin{array}{cc}
\chi_{1}(z) & \chi_{3}(z)
\end{array}  \right) \left(\begin{array}{ccc}
0 & & K_{31}(z,w) \\ K_{13}(z,w) & & 0
\end{array}  \right) \left(\begin{array}{c}
\chi_{1}(w) \\ \chi_{3}(w)
\end{array}  \right). \label{kernel}
\end{gather} 
The kernels $ K_{13}(z,w)$ and $ K_{31}(z,w)$ in \eqref{kernel} take the form 
\[
 K_{13}(z,w)=\dfrac{\cB(z,t)\varphi^{2}(w,t)}{(2\pi i) \cA(z,t) (z-w)}
 \]
 \[
  K_{31}(z,w)= \dfrac{\cC(z,t)\varphi^{-2}(w,t)}{(2\pi i) \cA(z,t) (z-w)}\,.
  \]
We introduce the operators
\begin{gather}
\begin{array}{c}
{\mathcal K}_{31} : L^{2}(l_{3}) \rightarrow L^{2}(l_{1}) \\  \\
{\mathcal K}_{13} : L^{2}(l_{1}) \rightarrow L^{2}(l_{3}),
\end{array}
\end{gather}
defined as 
\begin{gather}
\begin{array}{c}
\left({\mathcal K}_{31} h\right)(z) = \int_{l_{3}} K_{31}(z,w)h(w)dw, \\
\\
\left({\mathcal K}_{13} \widetilde{h}\right)(z) = \int_{l_{1}}K_{13}(z,w)\widetilde{h}(w)dw. 
\end{array} \label{component}
\end{gather}
The \tfn{} corresponding to \rh{}: \ref{rhp:8} is then 
\begin{gather}
\tau_{_{LU}}(t) = \det\left[\bb{1}_{_{L^2(l_{1}\cup l_{3})}} - 
\begin{pmatrix} 0&{\mathcal K}_{31}\\
{\mathcal K}_{13}&0\end{pmatrix} \right]. \label{Det}
\end{gather}

Since $\varphi^2(w,t)$ is analytic in $\Re (w)>0$ and $\lim_{w\rightarrow \infty} \varphi(w,t)= 1$, $\mathcal{K}_{13}$, $\mathcal{K}_{31}$ are Trace--class.
Therefore we can write $\tau_{_{LU}}(t) $ in the form 
\begin{gather}
\tau_{_{LU}}(t) =  \det\left[\bb{1}_{_{L^2(l_{3})}} - {\mathcal K}_{13} \circ {\mathcal K}_{31}  \right]. \label{tc}
\end{gather} 

The form of the \tfn{} \eqref{tc} can be further modified such that the operator acts on $L^{2}(i\bb{R})$ instead of $L^{2}(l_{3})$. We begin by splitting the function $h(z)$ as
\begin{gather}
h(z) = h_{L}(z) + h_{R}(z)
\end{gather}
where $h_{L,R}(z)$ are analytic to the left and right of $l_{3}$ respectively, and $h_{L,R}(z) = \mathcal{O}(z^{-1})$ as $z\rightarrow \infty$. The integrable operator \eqref{component} acts on $h(z)$ as
\begin{gather}
\left( {\mathcal K}_{13} {\mathcal K}_{31}h_{R}\right)(z) \equiv 0  \Rightarrow \left( {\mathcal K}_{13} {\mathcal K}_{31} h\right)(z) = \left({\mathcal K}_{13}{\mathcal  K}_{31} h_{L}\right)(z).
\end{gather}
We can therefore move the integration in $w$ from $l_{3}$ to $i\bb{R}$ in \eqref{component} and identify the space of functions $\left({\mathcal K}_{31} h\right)(z)$ with $H_{R}(i\bb{R})$, the Hardy space on the right half--plane. So, the operator
\begin{gather}
\left(\widetilde{\mathcal{K}}h\right)(z) :=\left({\mathcal K}_{13} {\mathcal K}_{31} h \right)(z) =  \frac{\cC(z,t)}{\cA(z,t)} \int_{l_{1}} \frac{d\widetilde{w}}{2\pi i} \int_{i\bb{R}} \frac{dw}{2\pi i} \frac{\varphi^{-2}(\widetilde{w})}{z-\widetilde{w}} \frac{\cB(\widetilde{w},t)}{\cA(\widetilde{w},t)} \frac{\varphi_{+}^{2}(w,t)}{\widetilde{w}-w} h(w)
\end{gather}
 The kernel, $\widetilde{K}(z,w)$ is 
\begin{gather}
\widetilde{K}(z,w) = \frac{\cC(z,t)}{\cA(z,t)} \varphi_{+}^{2}(w) \int_{l_{1}} \frac{d\widetilde{w}}{2\pi i} \frac{\varphi^{-2}(\widetilde{w})}{(z-\widetilde{w})(\widetilde{w}-w)} \frac{\cB(\widetilde{w},t)}{\cA(\widetilde{w},t)}.
\end{gather}
We can now move $l_{1}$ to $i\bb{R}+\epsilon$ from the right without changing the kernel $\widetilde{\mathcal{K}}$
\begin{align}
\widetilde{K}(z,w)& = \frac{\cC(z,t)}{\cA(z,t)} \varphi_{+}^{2}(w) \int_{i\bb{R}+\epsilon} \frac{d\widetilde{w}}{2\pi i} \frac{\varphi_{-}^{-2}(\widetilde{w})}{(z-\widetilde{w})(\widetilde{w}-w)} \frac{\cB(\widetilde{w},t)}{\cA(\widetilde{w},t)}\\
&=  \frac{\cC(z,t)}{\cA(z,t)} \varphi_{+}^{2}(w) \int_{i\bb{R}+ \epsilon} \frac{d\widetilde{w}}{2\pi i} \frac{\varphi_{+}^{-2}(\widetilde{w})}{(z-\widetilde{w})(\widetilde{w}-w)} \cA(\widetilde{w},t) \cB(\widetilde{w},t),
\end{align}
where in the last identity we use the relation from \eqref{F_i}, \eqref{varphi}: $\varphi_{+}(\widetilde{w}) = \varphi_{-}(\widetilde{w})\cA(\widetilde{w},t) $.
Therefore we conclude from \eqref{tc} and the above discussion that 
\begin{gather}
\tau_{_{LU}}(t) =  \det\left[\bb{1}_{_{L^2(l_{3})}} - {\mathcal K}_{13} \circ {\mathcal K}_{31}  \right]=\det\left[\bb{1}_{_{L^2(i\bb{R})}} - \widetilde{\mathcal{K}}  \right]\label{result}
\end{gather}
\end{proof}

\subsection{Malgrange forms}\label{subsection:4.3}
In \eqref{result}, we expressed the \tfn{} on LU as a Fredholm determinant. To relate $\tau_{_{LU}}$ to $\tau_{_{\Sigma}}$ in \eqref{tau_s}, we will first prove that the \tfn{} corresponding to the \rh{}:\ref{rhp:6}, call it $\tau_{_{LDU}}$, is equal to $\tau_{_{i\bb{R}}}$ plus non-vanishing explicit factors as in proposition \ref{proposition:1}, and then show that $\tau_{_{LU}}$ is related to $\tau_{_{LDU}}$ up to explicit terms. We know that the Malgrange form for the \rh{} on $i\bb{R}$ \eqref{tau_iR} is :
\begin{gather}
d_{t} \log\tau_{_{i\bb{R}}} = \int_{i\bb{R}} \frac{dz}{2\pi i } \Tr\left[ \Theta_{-}^{-1} \Theta_{-}' \dot{J}J^{-1}  \right]. \label{Mal1}
\end{gather}
Similarly, the Malgrange form of the \rh{} on LDU \eqref{F_i}: $Y_{i+1} = Y_{i} F_{i}$ is
\begin{gather}
d_{t}\log\tau_{_{LDU}} = \sum_{i=1}^{3} \int_{l_{i}} \frac{dz}{2\pi i } \Tr\left[ Y_{i}^{-1} Y_{i}'  \dot{F}_{i} F_{i}^{-1} \right]. \label{MalLDU}
\end{gather}
\begin{proposition}\label{proposition:3}
The Malgrange forms for the \rh{}s on the contours $i\bb{R}$ (RHP:\ref{rhp:5}) and on $LDU$ (RHP:\ref{rhp:6}) are related as
\begin{gather}
d_{t}\log\tau_{_{i\bb{R}}} = d_{t} \log\tau_{_{LDU}} - \int_{i\bb{R}} \frac{dz}{2\pi i } \dot{\left(\frac{\cB}{\cA} \right)} \left( \cA \cC' - \cA' \cC \right).
\end{gather}
the functions $\cA$, $\cB$, $\cC$ are defined in \eqref{jump_entries}.
\end{proposition}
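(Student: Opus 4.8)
The plan is to exploit the fact that the LDU factorization $J = F_{1}F_{2}F_{3}$ of \eqref{LDU} places the three jumps on mutually parallel lines that may be collapsed onto $i\bb{R}$; in this coalescing limit the intermediate boundary values of $Y$ become explicit products of $\Theta_{-}$ with the factors $F_{i}$, so that the two Malgrange forms \eqref{Mal1} and \eqref{MalLDU} can be compared pointwise on $i\bb{R}$. Concretely, I would first observe that when $l_{1},l_{2},l_{3}\to i\bb{R}$ the identifications \eqref{ide} propagate to $Y_{1}=\Theta_{-}$, $Y_{2}=\Theta_{-}F_{1}$, $Y_{3}=\Theta_{-}F_{1}F_{2}$ and $Y_{4}=\Theta_{-}F_{1}F_{2}F_{3}=\Theta_{+}$. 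This is legitimate because $\cA,\cB,\cC$ (hence the $F_{i}$ and $\varphi$ of \eqref{varphi}) are analytic in a strip about the imaginary axis and $\cA$ is nonvanishing there.

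Writing $M:=\Theta_{-}^{-1}\Theta_{-}'$, I would then substitute these products into \eqref{MalLDU}, using $Y_{2}^{-1}Y_{2}'=F_{1}^{-1}MF_{1}+F_{1}^{-1}F_{1}'$ and $Y_{3}^{-1}Y_{3}'=(F_{1}F_{2})^{-1}M(F_{1}F_{2})+(F_{1}F_{2})^{-1}(F_{1}F_{2})'$, and in parallel expand the single-line form \eqref{Mal1} through $\dot J J^{-1}=\dot F_{1}F_{1}^{-1}+F_{1}\dot F_{2}F_{2}^{-1}F_{1}^{-1}+F_{1}F_{2}\dot F_{3}F_{3}^{-1}F_{2}^{-1}F_{1}^{-1}$. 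After applying cyclicity of the trace, the three terms carrying the conjugated copies of $M$ match exactly between the two forms and cancel in the difference $\partial_{t}\log\tau_{_{LDU}}-\partial_{t}\log\tau_{_{i\bb{R}}}$. This cancellation is the conceptual heart of the argument: the difference of the two Malgrange forms is \emph{independent} of the unknown solution $\Theta$ and reduces to the explicit local expression
\[
\Delta = \Tr\big[F_{1}^{-1}F_{1}'\,\dot F_{2}F_{2}^{-1}\big]
+\Tr\big[(F_{2}^{-1}F_{1}^{-1}F_{1}'F_{2}+F_{2}^{-1}F_{2}')\,\dot F_{3}F_{3}^{-1}\big],
\]
built solely from the factors $F_{i}$ and their $z$- and $t$-derivatives.

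It then remains to evaluate $\Delta$ using the triangular and diagonal shapes of $F_{1},F_{2},F_{3}$ in \eqref{LDU}. Writing $a=\cA$, $b=\cB/\cA$, $c=\cC/\cA$, one has $F_{1}^{-1}F_{1}'=\left(\begin{smallmatrix}0&0\\c'&0\end{smallmatrix}\right)$, $\dot F_{2}F_{2}^{-1}=(\dot a/a)\s$, $F_{2}^{-1}F_{2}'=(a'/a)\s$ and $\dot F_{3}F_{3}^{-1}=\left(\begin{smallmatrix}0&\dot b\\0&0\end{smallmatrix}\right)$. The first trace in $\Delta$ vanishes, since the product of a strictly lower-triangular matrix with a diagonal one is traceless, whereas the second trace produces $a^{2}c'\,\dot b=(\cA\cC'-\cA'\cC)\,\dot{(\cB/\cA)}$, using $a^{2}c'=\cA\cC'-\cA'\cC$. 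Integrating over $i\bb{R}$ and rearranging yields precisely the stated relation.

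The step I expect to require the most care is the justification of the coalescing limit $l_{1},l_{2},l_{3}\to i\bb{R}$ together with the attendant contour manipulations: one must verify that $\cA,\cB,\cC$ and $\varphi$ continue analytically across the imaginary axis (away from $z=\pm 1/2$, which lie off $i\bb{R}$), that no contribution is lost at $z=0$ or $z=\infty$, and that the cyclic-trace rearrangements act on the correct boundary values. Once these analytic points are settled, the remaining algebra is the short explicit computation above.
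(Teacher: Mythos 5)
Your proposal is correct and follows essentially the same route as the paper's proof: expand $\dot J J^{-1}$ through the LDU factors, use cyclicity of the trace together with the identifications $Y_1=\Theta_-$, $Y_2=\Theta_- F_1$, $Y_3=\Theta_- F_1 F_2$ so that the $\Theta$-dependent terms reassemble into $\sum_i \Tr\left[ Y_i^{-1}Y_i' \dot F_i F_i^{-1}\right]$, leaving the explicit traces $\Tr\left[F_1^{-1}F_1'\dot F_2 F_2^{-1}\right]=0$ and $\Tr\left[(F_1F_2)^{-1}(F_1F_2)'\dot F_3 F_3^{-1}\right]=\dot{\left(\frac{\cB}{\cA}\right)}\left(\cA\cC'-\cA'\cC\right)$. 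The only cosmetic difference is your ``coalescing limit'' framing of the contour identification, which the paper handles directly via \eqref{ide}--\eqref{idi} and the analyticity of the $F_i$ between the lines.
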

\begin{proof}
We begin by substituting \eqref{LDU}: $J = F_{1} F_{2} F_{3}$ in the term
\begin{gather}
\dot{J}J^{-1}  = \left( \dot{F}_{1} F_{2} F_{3} + F_{1} \dot{F}_{2} F_{3} + F_{1} F_{2} \dot{F}_{3}\right) \left(F_{3}^{-1} F_{2}^{-1} F_{1}^{-1}  \right) \nonumber \\
 = \left( \dot{F}_{1} F_{1}^{-1} + F_{1} \dot{F}_{2} F_{2}^{-1} F_{1}^{-1} + F_{1} F_{2} \dot{F}_{3} F_{3}^{-1} F_{2}^{-1} F_{1}^{-1}  \right). \label{F123}
\end{gather}
Substituting in \eqref{F123} in the integrand of \eqref{Mal1},
\begin{gather}
\Tr\left[ \Theta_{-}^{-1} \Theta_{-}' \dot{J}J^{-1}  \right] = \Tr\left[ \Theta_{-}^{-1} \Theta_{-}' \left( \dot{F}_{1} F_{1}^{-1} + F_{1} \dot{F}_{2} F_{2}^{-1} F_{1}^{-1} + F_{1} F_{2} \dot{F}_{3} F_{3}^{-1} F_{2}^{-1} F_{1}^{-1}  \right)  \right]. \label{415}
\end{gather}
The equivalence \eqref{ide} along with the jump condition \eqref{F_i} imply that
\begin{gather} 
\Theta_{-} = Y_{1} \, , \quad \Theta_{-} F_{1} = Y_{2} \, ,\quad \Theta_{-} F_{1} F_{2} = Y_{3}. \label{idi}
\end{gather}
Substituting \eqref{idi} in \eqref{415},
\begin{gather}
\Tr\left[ \Theta_{-}^{-1} \Theta_{-}' \left( \dot{F}_{1} F_{1}^{-1} + F_{1} \dot{F}_{2} F_{2}^{-1} F_{1}^{-1} + F_{1} F_{2} \dot{F}_{3} F_{3}^{-1} F_{2}^{-1} F_{1}^{-1}  \right)  \right] \nonumber \\
= \Tr\left[ \Theta_{-}^{-1} \Theta_{-}'  \dot{F}_{1} F_{1}^{-1} + F_{1}^{-1} \Theta_{-}^{-1} \Theta_{-}' F_{1} \dot{F}_{2} F_{2}^{-1}  + F_{2}^{-1} F_{1}^{-1} \Theta_{-}^{-1} \Theta_{-}' F_{1} F_{2} \dot{F}_{3} F_{3}^{-1}    \right] \nonumber \\
= \Tr\left[ Y_{1}^{-1} Y_{1}'  \dot{F}_{1} F_{1}^{-1} + Y_{2}^{-1} Y_{2}' \dot{F}_{2} F_{2}^{-1} - F_{1}^{-1} F_{1}' \dot{F}_{2} F_{2}^{-1}+ Y_{3}^{-1} Y_{3}' \dot{F}_{3} F_{3}^{-1} - (F_{1} F_{2})^{-1} (F_{1} F_{2})' \dot{F}_{3} F_{3}^{-1}   \right] \nonumber \\
= \sum_{i=1}^{3} \Tr\left[ Y_{i}^{-1} Y_{i}'  \dot{F}_{i} F_{i}^{-1} \right] - \Tr\left[F_{1}^{-1} F_{1}' \dot{F}_{2} F_{2}^{-1} + (F_{1} F_{2})^{-1} (F_{1} F_{2})' \dot{F}_{3} F_{3}^{-1}   \right].
\end{gather}
Therefore, 
\begin{gather}
\int_{i\bb{R}} \frac{dz}{2\pi i} \Tr\left[ \Theta_{-}^{-1} \Theta_{-}' \dot{J}J^{-1}  \right] = \sum_{i=1}^{3} \int_{l_{i}} \frac{dz}{2\pi i} \Tr\left[ Y_{i}^{-1} Y_{i}'  \dot{F}_{i} F_{i}^{-1} \right] - \int_{i\bb{R}} \frac{dz}{2\pi i} \Tr\left[F_{1}^{-1} F_{1}' \dot{F}_{2} F_{2}^{-1} \right] \nonumber \\
\hspace*{0.5cm}- \int_{i\bb{R}} \frac{dz}{2\pi i} \Tr\left[ (F_{1} F_{2})^{-1} (F_{1} F_{2})' \dot{F}_{3} F_{3}^{-1}   \right]. \label{iR_LDU}
\end{gather}
Let us analyze the explicit terms.
\begin{itemize}
\item Since $F_{1}$ is upper triangular with constant diagonal entries, and $F_{2}$ is diagonal as defined in \eqref{LDU}, 
\begin{equation}
\Tr\left[F_{1}^{-1} F_{1}' \dot{F}_{2} F_{2}^{-1}\right]=0.
\end{equation} 
\item Substituting $F_{1,2,3}$ in the last term in \eqref{iR_LDU},
\begin{equation}
\Tr\left[(F_{1} F_{2})^{-1} (F_{1} F_{2})' \dot{F}_{3} F_{3}^{-1}   \right] = \dot{\left(\frac{\cB}{\cA} \right)} \left( \cA \cC' - \cA' \cC \right)
\end{equation}
where $\cA$, $\cB$, $\cC$ are explicit in terms of parabolic cylinder functions \eqref{jump_entries}.
\end{itemize}

Therefore, 
\begin{gather}
d_{t}\log\tau_{_{i\bb{R}}} = d_{t} \log\tau_{_{LDU}} - \int_{i\bb{R}} \frac{dz}{2\pi i } \dot{\left(\frac{\cB}{\cA} \right)} \left( \cA \cC' - \cA' \cC \right). \label{final2}
\end{gather}
\end{proof}

Recall from proposition \ref{proposition:3}, the Malgrange form of the \rh{} on LDU \eqref{MalLDU}:
\begin{gather}
d_{t}\log\tau_{_{LDU}} = \sum_{i=1}^{3} \int_{l_{i}} \frac{dz}{2\pi i } \Tr\left[ Y_{i}^{-1} Y_{i}'  \dot{F}_{i} F_{i}^{-1} \right]. \label{Mal3}
\end{gather}
For the \rh{} on LU (RHP:\ref{rhp:8}) with the jump condition \eqref{tF_1}: $\wY_{i+1} = \wY_{i} \wF_{i}$ where $i=1,3$, the Malgrange form reads
\begin{gather}
d_{t}\log\tau_{_{LU}} = \sum_{i=1,3} \int_{l_{i}} \frac{dz}{2\pi i } \Tr\left[ \wY_{i}^{-1} \wY_{i}'  \dot{\wF}_{i} \wF_{i}^{-1} \right]. \label{tau_LU}
\end{gather}
Refer to Appendix A \cite{bertola2012transition} for a proof of the integral form of the logarithmic derivative of the Fredholm determinant of the IIKS kernel.

\begin{proposition}\label{proposition:4} The Malgrange forms of the \rh{}s on contours LDU (RHP:\ref{rhp:6}) and LU (RHP:\ref{rhp:8}) are related as
\begin{gather}
d_{t} \log\tau_{_{LDU}} = d_{t} \log\tau_{_{LU}} + 2 \int_{i\bb{R}} \frac{dz}{2\pi i }\frac{\dot{\cA}(z,t)}{\cA(z,t)} \int_{i\bb{R}_{-}} \frac{dw}{2\pi i } \frac{\cA'(w,t)}{\cA(w,t)(z-w)}.
\end{gather}
\end{proposition}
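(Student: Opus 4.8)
The plan is to push the scalar diagonal factor $\varphi^{\s}$ through the LDU Malgrange form \eqref{Mal3} and compare with the LU form \eqref{tau_LU}, showing that the two match except for a remainder which turns out to be the total boundary-jump of a single scalar function that decays at infinity and therefore integrates to zero. Writing the definition \eqref{wY} as $Y_i=\wY_i\varphi^{\s}$ and differentiating, one obtains the basic identity
\[
Y_i^{-1}Y_i' = \varphi^{-\s}\big(\wY_i^{-1}\wY_i'\big)\varphi^{\s} + \frac{\varphi'}{\varphi}\s,
\]
together with $F_i=\varphi^{-\s}\wF_i\varphi^{\s}$ for $i=1,3$ and $F_2=\cA^{\s}$, $\dot F_2F_2^{-1}=\frac{\dot\cA}{\cA}\s$. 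First I would substitute these into each summand of \eqref{Mal3}.

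On $l_1$ and $l_3$, cyclicity of the trace and the commutativity of $\s$ with $\varphi^{\s}$ make the purely diagonal contributions cancel in pairs or vanish, leaving the LU integrand $\Tr[\wY_i^{-1}\wY_i'\,\dot{\wF}_i\wF_i^{-1}]$ plus a single off-diagonal cross term. Using the triangular structure of the jumps, e.g. $\wF_1\,\s\,\wF_1^{-1}=\s+2c_1E_{21}$ with $c_1=\tfrac{\cC}{\cA}\varphi^2$ (and the analogue on $l_3$), this cross term on $l_j$ equals $\frac{\dot\varphi}{\varphi}$ times the jump across $l_j$ of the scalar
\[
P(z,t):=\Tr\big[\wY^{-1}\wY'\,\s\big].
\]
On the middle line $l_2=i\bb{R}$, where $\wY$ is analytic, I would use $Y_2=\wY_2\varphi_-^{\s}$ to find $\Tr[Y_2^{-1}Y_2'\,\dot F_2F_2^{-1}]=\frac{\dot\cA}{\cA}\big(P+2\frac{\varphi_-'}{\varphi_-}\big)$, the term $2\frac{\varphi_-'}{\varphi_-}$ arising from $\Tr[\s^2]=2$.

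Collecting all contributions, $\partial_t\log\tau_{_{LDU}}-\partial_t\log\tau_{_{LU}}$ separates into the explicit scalar piece $2\int_{i\bb{R}}\frac{dz}{2\pi i}\frac{\dot\cA}{\cA}\frac{\varphi_-'}{\varphi_-}$ and a remainder
\[
L:=\int_{l_1}\frac{dz}{2\pi i}\frac{\dot\varphi}{\varphi}\,\Delta_{l_1}P+\int_{l_2}\frac{dz}{2\pi i}\Big(\Delta_{l_2}\tfrac{\dot\varphi}{\varphi}\Big)P+\int_{l_3}\frac{dz}{2\pi i}\frac{\dot\varphi}{\varphi}\,\Delta_{l_3}P,
\]
where I have used that, by $\varphi_+=\varphi_-\cA$ \eqref{varphi}, the jump of $\frac{\dot\varphi}{\varphi}$ across $l_2$ is exactly $\frac{\dot\cA}{\cA}$, while $\frac{\dot\varphi}{\varphi}$ is analytic across $l_1,l_3$ and $P$ is analytic across $l_2$.

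The crux, and the step I expect to be the main obstacle, is to recognise $L$ as the total boundary-jump of the single scalar function $g:=\frac{\dot\varphi}{\varphi}P$ across $\Gamma:=l_1\cup l_2\cup l_3$: on $l_1,l_3$ the factor $\frac{\dot\varphi}{\varphi}$ is continuous so $\Delta g=\frac{\dot\varphi}{\varphi}\Delta P$, and on $l_2$ the factor $P$ is continuous so $\Delta g=(\Delta\tfrac{\dot\varphi}{\varphi})P$. Since $\wY=\bb{1}+O(z^{-1})$ forces $P=O(z^{-2})$ and $\varphi\to1$ forces $\frac{\dot\varphi}{\varphi}=O(z^{-1})$, we have $g=O(z^{-3})$; the Plemelj representation $g(z)=\int_\Gamma\frac{ds}{2\pi i}\frac{\Delta g(s)}{s-z}$ then has vanishing $O(z^{-1})$ coefficient as $z\to\infty$, which is precisely $\int_\Gamma\frac{ds}{2\pi i}\Delta g(s)=0$, i.e. $L=0$. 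Finally I would rewrite the surviving explicit piece by integrating \eqref{varphi} by parts and shifting the contour, so that $\frac{\varphi_-'}{\varphi_-}(z)=\int_{i\bb{R}_-}\frac{dw}{2\pi i}\frac{\cA'(w)}{\cA(w)(z-w)}$, turning $2\int_{i\bb{R}}\frac{dz}{2\pi i}\frac{\dot\cA}{\cA}\frac{\varphi_-'}{\varphi_-}$ into the claimed double integral. The only delicate point beyond this is bookkeeping the $\varphi_\pm$ boundary values and the common orientation of the three lines; the genuine analytic content is the vanishing of $L$.
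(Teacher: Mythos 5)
Your proof is correct, and although it begins with the same gauge substitution as the paper ($Y_i=\wY_i\varphi^{\s}$, $F_2=\cA^{\s}$, expansion of the trace, cancellation of the purely diagonal $\varphi'\dot\varphi$ terms by triangularity of $\wF_i$, and extraction of the explicit piece $2\frac{\dot\cA}{\cA}\frac{\varphi_-'}{\varphi_-}$ from $\Tr[\s^2]=2$), it disposes of the non-explicit remainder by a genuinely different mechanism. The paper handles that remainder in two separate steps: it rewrites the $l_1,l_3$ cross terms as $\int\Tr[\Delta(\wY_i^{-1}\wY_i')\s\,\dot\varphi\varphi^{-1}]$, kills the outer pieces by closing the contour away from $i\bb{R}$, deforms the inner pieces onto $l_2$ and asserts they cancel against each other via $\wY_2=\wY_3$ (equations \eqref{intl_1}--\eqref{l13}); separately, it discards the $l_2$ contribution $\Tr[\wY_2^{-1}\wY_2'\,\dot F_2F_2^{-1}]$ by an appeal to Liouville's theorem \eqref{Tzero}. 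You instead keep all three remainder pieces and identify their sum as the total jump of the single scalar $g=\frac{\dot\varphi}{\varphi}\Tr[\wY^{-1}\wY'\s]$ across $l_1\cup l_2\cup l_3$, which vanishes because $g=\mathcal{O}(z^{-3})$ forces the $\mathcal{O}(z^{-1})$ moment of its Cauchy representation to be zero. Your route is not only different but more robust: when the two inner pieces are deformed onto $l_2$ they carry the two \emph{different} boundary values of $\varphi$ (which jumps there by $\cA$), so they do not cancel by themselves but rather sum to $-\int_{l_2}\frac{dz}{2\pi i}\Tr[\wY_2^{-1}\wY_2'\s]\frac{\dot\cA}{\cA}$, and the pointwise Liouville claim \eqref{Tzero} is likewise unjustified as stated (the integrand is not zero); what is true is that these two contributions cancel \emph{each other}, which is precisely what your total-jump argument establishes in one stroke. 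The only caveats are the bookkeeping issues you already flag — consistent orientation and boundary-value conventions (the paper itself wavers between $\varphi_+=\varphi_-\cA$ and its inverse) — plus the differentiability of the asymptotics $\wY=\bb{1}+\mathcal{O}(z^{-1})$ needed for $\Tr[\wY^{-1}\wY'\s]=\mathcal{O}(z^{-2})$, which is standard for solutions of such Riemann--Hilbert problems.
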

\begin{proof}

We will first simplify the integrals on $l_{1}$ and $l_{3}$ in \eqref{Mal3}. Given that \eqref{wY}:$Y_{i} = \wY_{i} \varphi^{\sigma_{3}}$ and \eqref{tF}:$F_{i} = \varphi^{-\sigma_{3}}\wF_{i} \varphi^{\sigma_{3}}$,
\begin{gather}
\sum_{i=1,3} \int_{l_{i}} \frac{dz}{2 \pi i } \Tr\left[ Y_{i}^{-1} Y_{i}' \dot{F}_{i} F_{i}^{-1} \right] \nonumber \\
= \sum_{i=1,3} \int_{l_{i} } \frac{dz}{2 \pi i } \Tr\left[ \left(\wY_{i} \varphi^{\sigma_{3}}\right)^{-1} \left(\wY_{i} \varphi^{\sigma_{3}}\right)' \partial_{t}\left(\varphi^{-\sigma_{3}}\wF_{i}\varphi^{\sigma_{3}}\right) \left( \varphi^{-\sigma_{3}}\wF_{i}^{-1}\varphi^{\sigma_{3}}\right) \right] \nonumber \\
= \sum_{i=1,3} \int_{l_{i}}  \frac{dz}{2 \pi i } \Tr\left[ \left(\wY_{i}^{-1} \wY_{i}' + \sigma_{3} \varphi' \varphi^{-1}\right) \left(- \sigma_{3}\dot{\varphi} \varphi^{-1} + \dot{\wF}_{i} \wF_{i}^{-1}+ \wF_{i} \sigma_{3}\dot{\varphi} \varphi^{-1} \wF_{i}^{-1}\right) \right] \nonumber \\
= \sum_{i=1,3} \int_{l_{1}}\frac{dz}{2 \pi i } \Tr\left[ \wY_{i}^{-1} \wY_{i}' \dot{\wF}_{i} \wF_{i}^{-1} \right] +  \sum_{i=1,3}\int_{l_{i}} \frac{dz}{2 \pi i } \Tr\left[ \left(\wY_{i}^{-1} \wY_{i}' \right) \left(- \sigma_{3} \dot{\varphi} \varphi^{-1} + \wF_{i} \sigma_{3} \dot{\varphi} \varphi^{-1} \wF_{i}^{-1}\right) \right] \nonumber \\ 
+ \sum_{i=1,3}\int_{l_{i}} \frac{dz}{2\pi i} \Tr\left[\sigma_{3}  \varphi' \varphi^{-1} \left( - \sigma_{3}\dot{\varphi} \varphi^{-1} + \dot{\wF}_{i} \wF_{i}^{-1}+ \wF_{i}\sigma_{3} \dot{\varphi} \varphi^{-1} \wF_{i}^{-1} \right)  \right]. \label{LDU_LU1}
\end{gather}
In \eqref{LDU_LU1}, $\wF_{i}$ are either lower or upper triangular with constant diagonal entries as in \eqref{tF}. Therefore,
\begin{equation}
\int_{l_{1} \cup l_{3}} \frac{dz}{2\pi i} \Tr\left[ \sigma_{3} \varphi' \varphi^{-1} \left( - \sigma_{3} \dot{\varphi} \varphi^{-1} + \dot{\wF}_{i} \wF_{i}^{-1}+ \wF_{i}\sigma_{3} \dot{\varphi} \varphi^{-1} \wF_{i}^{-1} \right)  \right] =0.
\end{equation}
Therefore, given \eqref{tau_LU}, \eqref{LDU_LU1} reads
\begin{gather}
\sum_{i=1,3} \int_{l_{i} } \frac{dz}{2 \pi i } \Tr\left[ Y_{i}^{-1} Y_{i}' \dot{F}_{i} F_{i}^{-1} \right]= \sum_{i=1,3}\int_{l_{i}}\frac{dz}{2 \pi i } \Tr\left[ \wY_{i}^{-1} \wY_{i}' \dot{\wF}_{i} \wF_{i}^{-1} \right] \nonumber\\ 
+  \sum_{i=1,3}\int_{l_{i} } \frac{dz}{2 \pi i } \Tr\left[ \left(\wY_{i}^{-1} \wY_{i}' \right) \left(- \sigma_{3}\dot{\varphi} \varphi^{-1} + \wF_{i} \sigma_{3} \dot{\varphi} \varphi^{-1} \wF_{i}^{-1}\right) \right] \nonumber \\
=d_{t}\log\tau_{_{LU}}  + \sum_{i=1,3} \int_{l_{i}} \frac{dz}{2 \pi i } \Tr\left[ \left(\wY_{i}^{-1} \wY_{i}' \right) \left(- \sigma_{3} \dot{\varphi} \varphi^{-1} + \wF_{i} \sigma_{3} \dot{\varphi} \varphi^{-1} \wF_{i}^{-1}\right) \right]. \label{LU1}
\end{gather}
Recalling \eqref{tF_1}: $\wY_{i+1} = \wY_{i} \wF_{i}$, the second term in \eqref{LU1} can be further simplified 
\begin{gather}
 \sum_{i=1,3}\int_{l_{i} } \frac{dz}{2 \pi i } \Tr\left[ \left(\wY_{i}^{-1} \wY_{i}' \right) \left(- \sigma_{3} \dot{\varphi} \varphi^{-1} + \wF_{i} \sigma_{3} \dot{\varphi} \varphi^{-1} \wF_{i}^{-1}\right) \right]\nonumber \\
  = \sum_{i=1,3} \int_{l_{i} } \frac{dz}{2 \pi i } \Tr\left[- \wY_{i}^{-1} \wY_{i}' \sigma_{3} \dot{\varphi} \varphi^{-1} + \wF_{i}^{-1}\wY_{i}^{-1} \wY_{i}' \wF_{i}\sigma_{3}  \dot{\varphi} \varphi^{-1}  \right] \nonumber \\
  = \sum_{i=1,3}\int_{l_{i} } \frac{dz}{2 \pi i } \Tr\left[- \wY_{i}^{-1} \wY_{i}'  \sigma_{3} \dot{\varphi} \varphi^{-1} + \wY_{i+1}^{-1} \wY_{i+1}' \sigma_{3}  \dot{\varphi} \varphi^{-1} - \wF_{i}^{-1} \wF_{i}' \sigma_{3} \dot{\varphi} \varphi^{-1}  \right] \nonumber \\
  = \sum_{i=1,3}\int_{l_{i} } \frac{dz}{2 \pi i } \Tr\left[\Delta\left( \wY_{i}^{-1} \wY_{i}' \right) \sigma_{3} \dot{\varphi} \varphi^{-1} - \wF_{i}^{-1} \wF_{i}' \sigma_{3} \dot{\varphi} \varphi^{-1}  \right] \nonumber \\
  =\sum_{i=1,3} \int_{l_{i} } \frac{dz}{2 \pi i } \Tr\left[\Delta\left( \wY_{i}^{-1} \wY_{i}' \right) \sigma_{3} \dot{\varphi} \varphi^{-1} \right], \label{434}
\end{gather}
where now $\Delta\left( \wY_{i}^{-1} \wY_{i}' \right) =  \wY_{i+1}^{-1} \wY_{i+1}' - \wY_{i}^{-1} \wY_{i}' $. The last line is obtained using the fact that $ \Tr\left[\wF_{i}^{-1} \wF_{i}' \s \dot{\varphi} \varphi^{-1} \right]=0 $ since $\wF_{i}$ is either lower or upper triangular with constant diagonal entries, and $\varphi$ is scalar. 

The final expression in \eqref{434} can be further simplified by noting that the function $\varphi$ has no jumps on $l_{1}$ and $l_{3}$. Beginning with the integral on $l_{1}$,
\begin{gather}
\int_{l_{1}}  \frac{dz}{2 \pi i } \Tr\left[\Delta\left( \wY_{1}^{-1} \wY_{1}' \right) \sigma_{3} \dot{\varphi} \varphi^{-1} \right]  = \int_{l_{1}}  \frac{dz}{2 \pi i } \Tr\left[\left( \wY_{2}^{-1} \wY_{2}' - \wY_{1}^{-1} \wY_{1}' \right)\sigma_{3}  \dot{\varphi} \varphi^{-1} \right] \nonumber \\
= \int_{l_{1}}  \frac{dz}{2 \pi i } \Tr\left[ \wY_{2}^{-1} \wY_{2}'  \sigma_{3} \dot{\varphi} \varphi^{-1} \right]. \label{intl_1}
\end{gather}  
To obtain the last line, we notice from fig. \ref{fig:4} that $\int_{l_{1}}  \frac{dz}{2 \pi i } \Tr\left[ \wY_{1}^{-1} \wY_{1}' \sigma_{3} \dot{\varphi} \varphi^{-1} \right] =0$ by closing the contour on the right. A similar computation follows for the integral on $l_{3}$ in \eqref{434}
\begin{gather}
\int_{l_{3}}  \frac{dz}{2 \pi i } \Tr\left[\Delta\left( \wY_{3}^{-1} \wY_{3}' \right) \sigma_{3} \dot{\varphi} \varphi^{-1} \right]  = \int_{l_{3}}  \frac{dz}{2 \pi i } \Tr\left[\left( \wY_{4}^{-1} \wY_{4}' - \wY_{3}^{-1} \wY_{3}' \right) \sigma_{3} \dot{\varphi} \varphi^{-1} \right] \nonumber \\
= -\int_{l_{3}}  \frac{dz}{2 \pi i } \Tr\left[ \wY_{3}^{-1} \wY_{3}' \sigma_{3}  \dot{\varphi} \varphi^{-1} \right]. \label{intl_3}
\end{gather} 
To obtain the last line, we note that $\int_{l_{3}}  \frac{dz}{2 \pi i } \Tr\left[ \wY_{4}^{-1} \wY_{4}' \sigma_{3}  \dot{\varphi} \varphi^{-1} \right]=0$ by closing the contour on the left (see fig. \ref{fig:4}). 

Gathering the terms \eqref{intl_1}, \eqref{intl_3}, and using \eqref{tF_1}:$\wY_{2} = \wY_{3}$, \eqref{434} reads
\begin{gather}
\sum_{i = 1,3}\int_{l_{i} } \frac{dz}{2 \pi i } \Tr\left[\Delta\left( \wY_{i}^{-1} \wY_{i}' \right) \sigma_{3} \dot{\varphi} \varphi^{-1} \right] = \int_{l_{1}}  \frac{dz}{2 \pi i } \Tr\left[ \wY_{2}^{-1} \wY_{2}' \sigma_{3}  \dot{\varphi} \varphi^{-1} \right] -\int_{l_{3}}  \frac{dz}{2 \pi i } \Tr\left[ \wY_{3}^{-1} \wY_{3}'  \sigma_{3} \dot{\varphi} \varphi^{-1} \right] \nonumber \\
= -\int_{l_{2}}  \frac{dz}{2 \pi i } \Tr\left[ \wY_{2}^{-1} \wY_{2}'  \sigma_{3} \dot{\varphi} \varphi^{-1} \right] +\int_{l_{2}}  \frac{dz}{2 \pi i } \Tr\left[ \wY_{3}^{-1} \wY_{3}'  \sigma_{3} \dot{\varphi} \varphi^{-1} \right]
=0. \label{l13}
\end{gather}
Substituting \eqref{l13} in \eqref{LU1},
\begin{gather}
\sum_{i=1,3} \int_{l_{i} } \frac{dz}{2 \pi i } \Tr\left[ Y_{i}^{-1} Y_{i}' \dot{F}_{i} F_{i}^{-1} \right] = \sum_{i=1,3}\int_{l_{i}}\frac{dz}{2 \pi i } \Tr\left[ \wY_{i}^{-1} \wY_{i}' \dot{\wF}_{i} \wF_{i}^{-1} \right] = \partial_{t} \log\tau_{_{LU}}. \label{LDU_LU3}
\end{gather}
 
We now compute the integral on $l_{2}$ in \eqref{LDU_LU1}
\begin{gather}
\int_{l_{2}}\frac{dz}{2 \pi i } \Tr\left[ Y_{2}^{-1} Y_{2}' \dot{F}_{2} F_{2}^{-1} \right] = \int_{l_{2}}\frac{dz}{2 \pi i } \Tr\left[ \left(\wY_{2} \varphi_{-}^{\sigma_{3}}  \right)^{-1} \left(\wY_{2}' \varphi_{-}^{\sigma_{3}} + \wY_{2} \left(\varphi_{-}^{\sigma_{3}}\right)' \right) \dot{F}_{2} F_{2}^{-1} \right] \nonumber \\
= \int_{l_{2}}\frac{dz}{2 \pi i } \Tr\left[  \varphi_{-}^{-\sigma_{3}}  \wY_{2}^{-1} \left(\wY_{2}' \varphi_{-}^{\sigma_{3}} + \wY_{2} \left( \varphi_{-}^{\sigma_{3}}\right)' \right) \dot{F}_{2} F_{2}^{-1} \right] \nonumber \\
= \int_{l_{2}}\frac{dz}{2 \pi i } \Tr\left[    \wY_{2}^{-1} \wY_{2}'  \dot{F}_{2} F_{2}^{-1} + \sigma_{3} \varphi_{-}^{-1} \varphi_{-}' \dot{F}_{2} F_{2}^{-1}\right]. \label{intl_2} 
\end{gather}
Since $\wY_{2}$ does not jump on $l_{2}$, Liouville theorem implies that
\begin{gather}
\Tr\left[    \wY_{2}^{-1} \wY_{2}'  \dot{F}_{2} F_{2}^{-1} \right] =0. \label{Tzero}
\end{gather}
The term 
\begin{gather}
\Tr\left[\sigma_{3} \varphi_{-}^{-1} \varphi_{-}' \dot{F}_{2} F_{2}^{-1} \right] \label{T2}
\end{gather}
in \eqref{intl_2} is an explicit function of $\cA(z,w)$ in \eqref{jump_entries}. From \eqref{LDU},
\begin{gather}
F_{2} = \cA^{\sigma_{3}} \Rightarrow \dot{F}_{2} F_{2}^{-1} = \frac{\dot{\cA}}{\cA} \sigma_{3}. \label{f2}
\end{gather}
The function $\varphi_{-}$ is the boundary value of $\varphi$ defined in \eqref{varphi}
\begin{gather}
 \varphi_{-} =  \exp\left[\int_{i\bb{R}-\epsilon} \frac{dw}{2\pi i} \frac{\log \cA(w,t)}{z-w} \right] \Rightarrow \varphi_{-}^{-1} \varphi_{-}' = \int_{i\bb{R}-\epsilon} \frac{dw}{2\pi i } \frac{\cA'(w,t)}{\cA(w,t)(z-w)}. \label{vp_A}
\end{gather}
The expression \eqref{intl_2} simplifies as follows due to \eqref{f2}, \eqref{vp_A} 
\begin{gather} 
\int_{l_{2}}\frac{dz}{2 \pi i } \Tr\left[ Y_{2}^{-1} Y_{2}' \dot{F}_{2} F_{2}^{-1} \right] = \int_{l_{2}} \frac{dz}{2\pi i } \Tr\left[ \sigma_{3}\varphi_{-}^{-1} \varphi_{-}' \dot{F}_{2} F_{2}^{-1}  \right] \nonumber \\
=\int_{i\bb{R}} \frac{dz}{2\pi i } \left(2 \frac{\dot{\cA}(z,t)}{\cA(z,t)}\right) \int_{i\bb{R}- \epsilon} \frac{dw}{2\pi i } \frac{\cA'(w,t)}{\cA(w,t)(z-w)}. \label{TRACE}
\end{gather}
Substituting \eqref{TRACE} and \eqref{LDU_LU3}, in \eqref{Mal3}
\begin{gather} \label{LU}
d_{t} \log\tau_{_{LDU}} = \sum_{i=1}^{3} \int_{l_{i} } \frac{dz}{2 \pi i } \Tr\left[ Y_{i}^{-1} Y_{i}' \dot{F}_{i} F_{i}^{-1} \right] \nonumber \\ = \sum_{i=1,3} \int_{l_{i} } \frac{dz}{2 \pi i } \Tr\left[ Y_{i}^{-1} Y_{i}' \dot{F}_{i} F_{i}^{-1} \right]+ \int_{i\bb{R}} \frac{dz}{2\pi i } \Tr\left[\sigma_{3} \varphi_{-}^{-1} \varphi_{-}' \dot{F}_{2} F_{2}^{-1}  \right]  \nonumber\\
=   d_{t} \log\tau_{_{LU}} + 2 \int_{i\bb{R}} \frac{dz}{2\pi i }\frac{\dot{\cA}(z,t)}{\cA(z,t)} \int_{i\bb{R}_{-}} \frac{dw}{2\pi i } \frac{\cA'(w,t)}{\cA(w,t)(z-w)} \label{final3}
 \end{gather} 
\end{proof}
\pagebreak 

\section{Proof of theorem \ref{theorem:1}}\label{section:5}

\begin{proof}

The propositions \ref{proposition:1}, \ref{proposition:3}, \ref{proposition:4} imply that the \tfn{}s $\tau_{_{\Sigma}}$ and $\tau_{_{LU}}$ are related through explicit factors, and the proposition \ref{proposition:2} expresses $\tau_{_{LU}}$ as a Fredholm determinant. Therefore, the \tfn{} of \pii{} equation defined in \eqref{def:1}
{\small \begin{gather}
d_{t} \log\tau_{_{PII}}
 \equiv d_{t} \log\tau_{_{\Sigma}} 
\mathop{=}^{ \eqref{final1}} d_{t} \log\tau_{_{i\bb{R}}} - \int_{i\bb{R}} \frac{dz}{2\pi i} \widetilde{\mathcal{F}}(z,t;\nu,h)   -\left[ \frac{4i\nu}{3} + \frac{2 \nu^2}{t}  \right]-2g\left(\nu, h; t\right)\nonumber \\
\mathop{=}^{ \eqref{final2}} d_{t} \log\tau_{_{LDU}} - \int_{i\bb{R}} \frac{dz}{2\pi i } \left\lbrace \dot{\left(\frac{\cB}{\cA} \right)} \left( \cA \cC' - \cA' \cC \right)  + \widetilde{\mathcal{F}}(\z,\x,t;\nu,h)\right\rbrace-\left[ \frac{4i\nu}{3} + \frac{2 \nu^2}{t}\right] -2g\left(\nu, h; t\right)\nonumber \\
\mathop{=}^{ \eqref{final3}} d_{t} \log\tau_{_{LU}} + \int_{i\bb{R}} \frac{dz}{2\pi i } \left\lbrace \frac{2 \dot{\cA}(z,t)}{\cA(z,t)} \left(\int_{i\bb{R}-\epsilon} \frac{dw}{2\pi i } \frac{\cA'(w,t)}{\cA(w,t)(z-w)} \right)   - \dot{\left(\frac{\cB}{\cA} \right)} \left( \cA \cC' - \cA' \cC \right) 
 - \widetilde{\mathcal{F}}(z,t;\nu,h)\right\rbrace \nonumber \\ -\left[ \frac{4i\nu}{3} + \frac{2 \nu^2}{t} \right]-2g\left(\nu, h; t\right) \nonumber \\
 \mathop{=}^{ \eqref{result}} d_{t} \log\det\left[ \bb{1}_{_{L^2(i\bb{R})}}-\widetilde{\mathcal{K}} \right]-\left[ \frac{4i\nu}{3} + \frac{2 \nu^2}{t} \right]-2g\left(\nu, h; t\right) \nonumber\\
  +  \int_{i\bb{R}} \frac{dz}{2\pi i } \left\lbrace \frac{2 \dot{\cA}(z,t)}{\cA(z,t)} \left(\int_{i\bb{R}-\epsilon} \frac{dw}{2\pi i } \frac{\cA'(w,t)}{\cA(w,t)(z-w)} \right)  - \dot{\left(\frac{\cB}{\cA} \right)} \left( \cA \cC' - \cA' \cC \right) 
 - \widetilde{\mathcal{F}}(z,t;\nu,h)\right\rbrace . \label{FIN}
\end{gather}}
In \eqref{FIN}, the functions $\cA$, $\cB$, $\cC$ defined in \eqref{jump_entries} are explicit in terms of parabolic cylinder functions,  $\widetilde{\mathcal{F}}$ is defined in \eqref{F_def}, and the term 
 \begin{gather}
 \int_{i\bb{R}} \frac{dz}{2\pi i } \left\lbrace \frac{2 \dot{\cA}(z,t)}{\cA(z,t)} \left(\int_{i\bb{R}-\epsilon} \frac{dw}{2\pi i } \frac{\cA'(w,t)}{\cA(w,t)(z-w)} \right)  - \dot{\left(\frac{\cB}{\cA} \right)} \left( \cA \cC' - \cA' \cC \right) 
 - \widetilde{\mathcal{F}}(z,t;\nu,h)\right\rbrace
 \end{gather}
 depends only on $h$, $\nu$ and $t$. We then define 
 {\small \begin{align}
 \mathcal{F}(t,\nu,h) &:= \int_{i\bb{R}} \frac{dz}{2\pi i } \left\lbrace \frac{2 \dot{\cA}(z,t)}{\cA(z,t)} \left(\int_{i\bb{R}-\epsilon} \frac{dw}{2\pi i } \frac{\cA'(w,t)}{\cA(w,t)(z-w)} \right)  - \dot{\left(\frac{\cB}{\cA} \right)} \left( \cA \cC' - \cA' \cC \right) 
 - \widetilde{\mathcal{F}}(z,t;\nu,h)\right\rbrace\nonumber\\
 & \hspace{10.5cm} -2g\left(\nu, h; t\right) . \label{calF_def}
 \end{align}}
In terms of $\mathcal{F}(t,\nu,h)$, \eqref{FIN} reads
\begin{gather}
 d_{t} \log\tau_{_{PII}} = d_{t} \log\det\left[ \bb{1}_{_{L^2(i\bb{R})}} -\widetilde{\mathcal{K}} \right] + \mathcal{F}(t,\nu,h) -\left[ \frac{4i\nu}{3} + \frac{2 \nu^2}{t} \right].
 \end{gather}  
Therefore, the \tfn{} of \pii{} can be expressed as a Fredholm determinant of an integrable operator up to explicit factors. Furthermore, solving the RHP:\ref{rhp:8} is equivalent to solving the RHP:\ref{rhp:5}, which in turn is tantamount to solving the RHP:\ref{rhp:4}. Therefore, the zeros of $\tau_{_{PII}}$ (solvability condition of RHP:\ref{rhp:4}) are completely determined by the zeros of the Fredholm determinant \eqref{result}.
\end{proof}

\pagebreak
\nocite{*}
\printbibliography

\end{document}